\newtheorem{definition}{Definition}
\newtheorem{theorem}{Theorem}
\newtheorem{lemma}{Lemma}
\newtheorem{instance}{Instance}
\newenvironment{proof}{{\sc Proof. }}{\hfill$\Box$\vspace{0.2in}}
\def\mcA{\mathcal{A}}
\def\mcI{\mathcal{I}}
\def\mcJ{\mathcal{J}}
\def\mcM{\mathcal{M}}
\title{Randomized algorithms for fully online multiprocessor scheduling with testing}
\author{
	Mingyang~Gong\thanks{Department of Computing Science, University of Alberta.  Edmonton, Alberta T6G 2E8, Canada.
	\texttt{\{mgong4, guohui\}@ualberta.ca}}
\and
	Zhi-Zhong~Chen\thanks{Division of Information System Design, Tokyo Denki University. Saitama 350-0394, Japan.
	\texttt{zzchen@mail.dendai.ac.jp}}
\and
	Guohui~Lin$^*$\thanks{Correspondence author. Email: \texttt{guohui@ualberta.ca}}
\and
	Lusheng~Wang\thanks{Department of Computer Science, City University of Hong Kong.  Hong Kong SAR, China.
	\texttt{cswangl@cityu.edu.hk}}
}
\date{\today}
\begin{document}
\maketitle

\begin{abstract}
We contribute {\em the first} randomized algorithm that is an integration of {\em arbitrarily many} deterministic algorithms
for the fully online multiprocessor scheduling with testing problem.
When there are only two machines,
we show that with two component algorithms its expected competitive ratio is already strictly smaller than
the best proven deterministic competitive ratio lower bound.
Such algorithmic results are rarely seen in the literature.

Multiprocessor scheduling is one of the first combinatorial optimization problems that have received numerous studies.
Recently, several research groups examined its testing variant,
in which each job $J_j$ arrives with an upper bound $u_j$ on the processing time and a testing operation of length $t_j$;
one can choose to execute $J_j$ for $u_j$ time,
or to test $J_j$ for $t_j$ time to obtain the exact processing time $p_j$ followed by immediately executing the job for $p_j$ time.
Our target problem is the fully online version, 
in which the jobs arrive in sequence so that the testing decision needs to be made at the job arrival as well as the designated machine.
We propose an expected $(\sqrt{\varphi + 3} + 1) (\approx 3.1490)$-competitive randomized algorithm
as a {\em non-uniform} probability distribution over arbitrarily many deterministic algorithms,
where $\varphi = \frac {\sqrt{5} + 1}2$ is the Golden ratio.
When there are only two machines, we show that our randomized algorithm based on two deterministic algorithms is already
expected $\frac {3 \varphi + 3 \sqrt{13 - 7\varphi}}4 (\approx 2.1839)$-competitive.
Besides, we use Yao's principle to prove lower bounds of $1.6682$ and $1.6522$ on the expected competitive ratio
for any randomized algorithm at the presence of at least three machines and only two machines, respectively,
and prove a lower bound of $2.2117$ on the competitive ratio for any deterministic algorithm when there are only two machines.

\paragraph{Keywords:}
Scheduling; multiprocessor scheduling; scheduling with testing; makespan; randomized algorithm 
\end{abstract}

\subsubsection*{Acknowledgments.}
This research is supported by the NSERC Canada,
the Grant-in-Aid for Scientific Research of the Ministry of Education, Science, Sports and Culture of Japan, under Grant No. 18K11183,
the National Science Foundation of China (NSFC: 61972329),
and GRF grants for Hong Kong Special Administrative Region, China (CityU 11210119, CityU 11206120, CityU11218821).

\newpage
\section{Introduction}
We study the {\em fully online multiprocessor scheduling with testing}~\cite{DEM18,DEM20,AE20,AE21} problem in this paper,
and study it from the randomized algorithm perspective.
Multiprocessor scheduling~\cite{GJ79} is one of the first well-known NP-hard combinatorial optimization problems,
having received extensive research in the past several decades.

An instance $I$ of multiprocessor scheduling consists of a set of $n$ jobs $\mcJ = \{J_1, J_2, \ldots, J_n\}$,
each to be executed {\em non-preemptively} on one of a set of $m$ parallel identical machines $\mcM = \{M_1, M_2, \ldots, M_m\}$;
and the goal is to minimize the makespan $C_{\max}$, that is, the maximum job completion time.
Different from the classic setting where each job $J_j$ comes with the processing time $p_j$,
in {\em scheduling with testing} each job $J_j$ arrives with an upper bound $u_j$ on the processing time $p_j$ and a testing operation of length $t_j$,
but $p_j$ remains unknown until the job is tested.
The job $J_j$ can either be executed on one of the machines for $u_j$ time
or be tested for $t_j$ time followed by immediately executing for $p_j$ time on the same machine.

If all the jobs arrive at time zero, multiprocessor scheduling with testing is a {\em semi-online} problem,
denoted as $P \mid t_j, 0 \le p_j \le u_j \mid C_{\max}$.
In this paper, we investigate the {\em fully online} problem in which the jobs arrive in sequence
such that the testing decision needs to be made at the job arrival as well as the designated machine for testing and/or executing,
denoted as $P \mid online, t_j, 0 \le p_j \le u_j \mid C_{\max}$.
Apparently, semi-online is a special case of fully online,
and in both cases the scheduler should take advantage of the known information about a job upon its arrival to
decide whether or not to test the job so as to best balance the total time spent on the job due to the unknown processing time.

Given a polynomial time deterministic algorithm for the semi-online or fully online problem,
let $C(I)$ be the makespan produced by the algorithm on an instance $I$ and $C^*(I)$ be the makespan of the optimal {\em offline} schedule, respectively.
The performance of the algorithm is measured by the {\em competitive ratio} defined as $\sup_I \{C(I)/C^*(I)\}$,
where $I$ runs over all instances of the problem,
and the algorithm is said $\sup_I \{C(I)/C^*(I)\}$-{\em competitive}.
Switching to a randomized algorithm, we correspondingly collect its {\em expected} makespan $E[C(I)]$ on the instance $I$ and
the randomized algorithm is said {\em expected} $\sup_I \{E[C(I)]/C^*(I)\}$-competitive.
For online problems, randomized algorithms sometimes can better deal with uncertainties leading to
lower expected competitive ratios than the competitive ratios of the best deterministic algorithms.
We contribute such a randomized algorithm for $P \mid online, t_j, 0 \le p_j \le u_j \mid C_{\max}$,
and furthermore, when there are only two machines,
we show that its expected competitive ratio is strictly smaller than the best proven competitive ratio lower bound of any deterministic algorithm.

We remind the readers that in our problem the job processing is non-preemptive.
In the literature, researchers have also considered {\em preemptive} job processing~\cite{DEM18,DEM20,AE20,AE21},
where any testing or execution operation can be interrupted and resumed later,
or the more restricted {\em test-preemptive} variant~\cite{AE21},
where the testing and execution operations of a tested job are non-preemptive but
the execution operation does not have to follow immediately the testing operation or on the same machine.
Also, our goal is to minimize the makespan $C_{\max}$, that is, the min-max objective;
while another important goal to minimize the total job completion time, or the min-sum objective,
has received much research too~\cite{DEM18,DEM20}.

\subsection{Previous work on fully online case}
There are not too many existing, deterministic or randomized, approximation algorithms for
the fully online problem $P \mid online, t_j, 0 \le p_j \le u_j \mid C_{\max}$.
We first distinguish a special case where all the testing operations have a unit time, i.e., $t_j = 1$ for every job $J_j$,
called the {\em uniform testing case}~\cite{DEM18,DEM20,AE21}, denoted as $P \mid online, t_j = 1, 0 \le p_j \le u_j \mid C_{\max}$.
Note that in the {\em general testing case}, the testing times can be any non-negative values.

When there is only a single machine, the job processing order on the machine is irrelevant to the makespan.
This hints that the fully online and the semi-online problems are the same.
The first set of results is on the semi-online uniform testing problem $P1 \mid t_j = 1, 0 \le p_j \le u_j \mid C_{\max}$,
due to D{\"u}rr et al.~\cite{DEM18,DEM20}.
They proposed to test the job $J_j$ if $u_j \ge \varphi = \frac {\sqrt{5}+1}2$,
or to test it with probability $f(u_j) = \max\left\{0, \frac {u_j(u_j - 1)}{u_j(u_j - 1) + 1}\right\}$,
leading to a deterministic $\varphi$-competitive algorithm and a randomized expected $\frac 43$-competitive algorithm, respectively.
Let $r_j = \frac {u_j}{t_j}$;
Albers and Eckl~\cite{AE20} extended the above two algorithms for the general testing case $P1 \mid t_j, 0 \le p_j \le u_j \mid C_{\max}$,
to test the job $J_j$ if $r_j \ge \varphi$ or to test it with probability $f(r_j)$,
and achieved the same competitive ratio and expected competitive ratio, respectively.
The authors~\cite{DEM18,DEM20,AE20} showed that both algorithms are the best possible,
that is, $\varphi$ is a lower bound on the competitive ratio for any deterministic algorithm
and $\frac 43$ is a lower bound on the expected competitive ratio for any randomized algorithm by Yao's principle~\cite{Yao77}, respectively.

When there are at least two machines, fully online is more general than semi-online.
Albers and Eckl~\cite{AE21} proposed to test the job $J_j$ if $r_j \ge \varphi$ in the list scheduling rule~\cite{Gra66}
that assigns each job to the least loaded machine for processing (possible testing and then executing).
They showed that such an algorithm is $\varphi (2 - \frac 1m)$-competitive and the analysis is tight, where $m$ is the number of machines.
They also showed a lower bound of $2$ on the competitive ratio for any deterministic algorithm even in the uniform testing case
$P \mid online, t_j = 1, 0 \le p_j \le u_j \mid C_{\max}$~\cite{AE21},
and a slightly better lower bound of $2.0953$ for the two-machine general testing case $P2 \mid online, t_j, 0 \le p_j \le u_j \mid C_{\max}$,
i.e., when there are only two machines~\cite{AE21}.

\subsection{Previous work on semi-online case}
See the above reviewed results when there is only a single machine.

When there are at least two machines,
recall that in the semi-online problems, all jobs arrive at time zero,
and therefore the scheduler can take advantage of all the known $u_j$ and $t_j$ values at time zero for not only job testing decision making
but also to form certain job processing orders to better reduce the makespan.
Indeed this is the case in the pioneering so-called SBS algorithm by Albers and Eckl~\cite{AE21} for the general testing case,
which is $3.1016$-competitive (when $m$ tends to infinity).
For the uniform testing case, Albers and Eckl~\cite{AE21} also proposed a $3$-competitive algorithm,
along with a lower bound of $\max\{\varphi, 2 - \frac 1m\}$ on the competitive ratio for any deterministic algorithm.

The above two competitive ratios for the general and the uniform testing cases had been improved to $2.9513$ and $2.8081$ by Gong and Lin~\cite{GL21}, respectively,
and the state-of-the-art ratios are $2.8019$ and $2.5276$ by Gong et al.~\cite{GFL22}, respectively.

For proving the lower bound of $2 - \frac 1m$ on the competitive ratio,
Albers and Eckl~\cite{AE21} presented an instance of $P \mid t_j = 1, 0 \le p_j \le u_j \mid C_{\max}$
that forces any deterministic algorithm to test all the jobs.
This implies that the lower bound is not only on the competitive ratio for any deterministic algorithm,
but also on the expected competitive ratio for any randomized algorithm by Yao's principle~\cite{Yao77}.
Furthermore, the lower bound of $2 - \frac 1m$ is then on the expected competitive ratio for any randomized algorithm
for the fully online uniform testing problem $P \mid online, t_j = 1, 0 \le p_j \le u_j \mid C_{\max}$.

All the above reviewed competitive ratios and lower bounds are plotted in Figure~\ref{fig00}.
We point out that for both the fully online and the semi-online problems at the presence of at least two machines,
there is no existing randomized algorithm, except deterministic algorithms {\em trivially} treated as randomized.

\begin{figure}[htb]
\begin{center}
\unitlength=1pt
\begin{picture}(400, 210)
\put(10, 125){\tiny \it general testing:}
\put(10, 30){\tiny \it uniform testing:}
\put(330, 73){\tiny \it randomized axis}
\put(330, 66){\tiny \it deterministic axis}
\put(10, 70){$\dots$}
\put(30, 71){\circle*{4}}
\put(28, 57){$1$}
\put(30, 70){\vector(1, 0){350}}
\put(30, 72){\vector(1, 0){350}}
\put(211, 70){\circle{4}}
\put(211, 140){\vector(0, -1){65}}
\put(198, 145){$2.8019 - \frac 1m$~\cite{GFL22}}
\put(226, 70){\circle{4}}
\put(226, 120){\vector(0, -1){45}}
\put(213, 125){$2.9513 - \frac 4{3m}$~\cite{GL21}}
\put(242, 70){\circle{4}}
\put(242, 100){\vector(0, -1){25}}
\put(239, 105){$3.1016 (1 - \frac 1{3m})$~\cite{AE21}}
\put(261, 67){$\triangleleft$}
\put(264, 83){\vector(0, -1){10}}
\put(261, 85){$2\varphi$~\cite{AE21}}
\put(246, 70){$\triangleleft$}
\put(249, 160){\vector(0, -1){85}}
\put(240, 165){\bf $3.1490$~(this paper)}
\put(144, 70){$\triangleleft$}
\put(147, 160){\vector(0, -1){85}}
\put(134, 165){\bf $2.1839$~(this paper)}
%
\put(183, 70){\circle*{3}}
\put(183, 10){\vector(0, 1){55}}
\put(180, 0){$2.5276(1 - \frac 1{4m})$~\cite{GFL22}}
\put(211, 70){\circle*{3}}
\put(211, 30){\vector(0, 1){35}}
\put(208, 20){$2.8081 - \frac 2{3m}$~\cite{GL21}}
\put(230, 70){\circle*{3}}
\put(230, 50){\vector(0, 1){15}}
\put(227, 40){$3(1 - \frac 1{3m})$~\cite{AE21}}
%
\put(64, 72){\circle{6}}
\put(64, 120){\vector(0, -1){45}}
\put(61, 125){$\frac 43$~\cite{AE20}}
\put(92, 70){\circle{6}}
\put(92, 100){\vector(0, -1){25}}
\put(88, 105){$\varphi$~\cite{AE20}}
\put(92, 70){\color{red}\circle*{4.5}}
\put(92, 70){\circle*{3}}
\put(92, 50){\vector(0, 1){15}}
\put(88, 40){$\varphi$~\cite{DEM18,DEM20}}
\put(64, 72){\color{red}\circle*{4.5}}
\put(64, 72){\circle*{3}}
\put(64, 30){\vector(0, 1){37}}
\put(58, 20){$\frac 43$~\cite{DEM18,DEM20}}
%
\put(123, 70){\color{red}\circle*{4}}
\put(123, 30){\vector(0, 1){35}}
\put(110, 20){$\max\{\varphi, 2 - \frac 1m\}$~\cite{AE21}}
\put(121, 70){\color{red}$\triangleright$}
\put(123, 10){\vector(0, 1){58}}
\put(110, 5){$2 - \frac 1m$~\cite{AE21}}
\put(127, 67){\color{red}\small $\blacktriangleright$}
\put(130, 50){\vector(0, 1){15}}
\put(131, 40){2~\cite{AE21}}
\put(136, 67){\color{red}$\triangleright$}
\put(138, 83){\vector(0, -1){10}}
\put(125, 85){2.0953~\cite{AE21}}
\put(149, 67){\color{red}$\triangleright$}
\put(151, 103){\vector(0, -1){30}}
\put(138, 105){\bf $2.2117$~(this paper)}
\put(96, 70){\color{red}$\triangleright$}
\put(98, 180){\vector(0, -1){105}}
\put(98, 185){\bf $1.6682$~(this paper)}
\put(93, 70){\color{red}$\triangleright$}
\put(95, 200){\vector(0, -1){125}}
\put(84, 205){\bf $1.6522$~(this paper)}
\end{picture}
\end{center}
\caption{Competitive ratios of the existing deterministic and randomized algorithms for
	the fully online and semi-online multiprocessor scheduling with testing problems.
	On the $x$-axes, the red-colored symbols indicate lower bounds;
	the triangles ($\triangleleft$'s and $\triangleright$'s for general testing and $\blacktriangleright$ for uniform testing)
	refer to the results for the fully online problems,
	while the circles ($\circ$'s for general testing and $\bullet$'s for uniform testing) refer to those for the semi-online problems.
	Between the two $x$-axes, the lower one is for deterministic algorithms, while the upper is for randomized ones;
	the results above the axes are for the general testing case, 	while those below are for the uniform testing case.\label{fig00}} 
\end{figure}
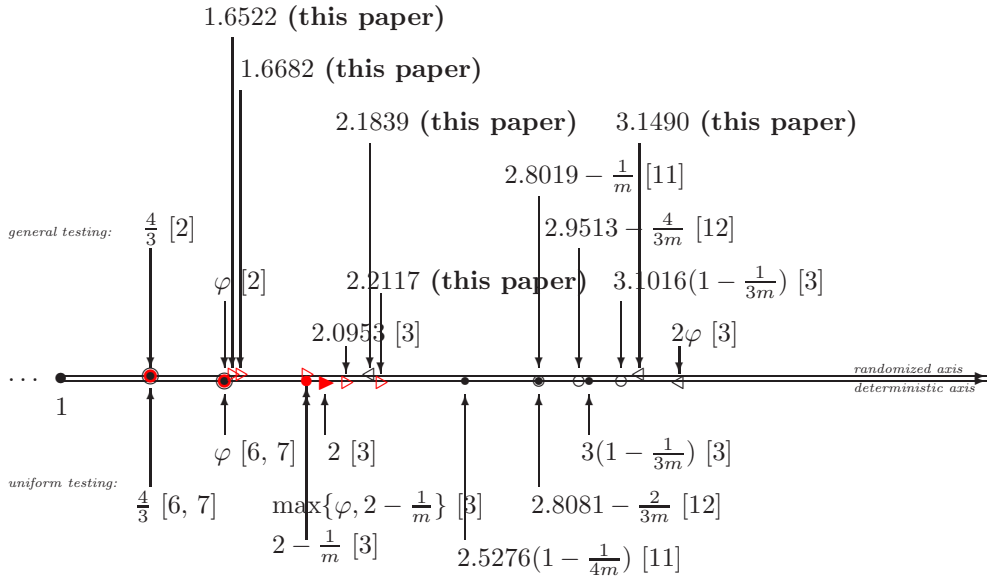

\subsection{Other related work}
We will not review the existing deterministic algorithms for the min-sum objective,
but would like to point out that 
D{\"u}rr et al.~\cite{DEM18,DEM20} presented an expected $1.7453$-competitive randomized algorithm for
the single machine uniform testing problem $P1 \mid t_j = 1, 0 \le p_j \le u_j \mid \sum_j C_j$,
which tests the to-be-tested jobs in a uniform random order.
They also proved a lower bound of $1.6257$ on the expected competitive ratio.
Albers and Eckl~\cite{AE20} designed a randomized algorithm for the general testing case $P1 \mid t_j, 0 \le p_j \le u_j \mid \sum_j C_j$,
in which the job $J_j$ is tested with a rather complex probability function in the ratio $r_j$;
the algorithm is expected $3.3794$-competitive.

We next review the state-of-the-art randomized algorithms for the classic online multiprocessor scheduling problem
$P \mid online \mid C_{\max}$,
in which the jobs arrive in sequence, the processing time $p_j$ is revealed when the job $J_j$ arrives,
and upon arrival the job must be assigned to a machine irrevocably for execution in order to minimize the makespan.
It is well-known that for this problem, the improvement from the competitive ratio $2$,
by the first deterministic algorithm list-scheduling~\cite{Gra66},
to the current best $1.9201$~\cite{FW00} took a long time.

For every online approximation algorithm, one of the most challenging parts in performance analysis is to precisely estimate the optimal offline makespan.
The three most typical lower bounds are 
$\frac 1m \sum_{j = 1}^n p_j$ (the average machine load),
$\max_{j = 1}^n p_j$ (the largest job processing time),
and $p_{[m]} + p_{[m+1]}$ where $p_{[j]}$ is the $j$-th largest processing time among all the jobs.
Albers~\cite{Alb02} proved that if only the above three lower bounds are used,
then the competitive ratio of any deterministic algorithm cannot be smaller than $1.919$ when $m$ tends to infinity.

On the positive side, Albers~\cite{Alb02} presented an expected $1.916$-competitive randomized algorithm,
and in the performance analysis only the above three lower bounds on the optimal offline makespan are used.
In more details, Albers proposed two deterministic algorithms $A_0$ and $A_1$,
and proposed to execute each with probability $0.5$.
The author showed that if the expected makespan is too large compared to the three lower bounds,
that is, both $A_0$ and $A_1$ produce large makespans,
then there are many large jobs present in the instance so that the optimal offline makespan cannot be too small.

Such a randomized algorithm that is a distribution of a constant number of deterministic algorithms is referred to as
a {\em barely randomized algorithm}~\cite{RWS94},
which takes advantage of its good component algorithms, in that it performs well if at least one component algorithm performs well,
and if no component algorithm performs well then the optimal offline makespan is also far away from the lower bounds.
This design idea has been used in approximating many other optimization problems and achieved breakthrough results, for example,
other scheduling problems~\cite{Sei03,FPZ08} and the $k$-server problem~\cite{BCL00}.
In fact, for the online multiprocessor scheduling problem $P \mid online \mid C_{\max}$, prior to \cite{Alb02},
Seiden~\cite{Sei03} modified the {\em non-barely} randomized algorithms of Bartal et al.~\cite{BFK95} and Seiden~\cite{Sei00},
both of which assign the current job to one of the two least loaded machines with certain probability,
to be barely randomized algorithms that are a {\em uniform} distribution of $k$ deterministic algorithms,
where $k$ is the number of schedules to be created inside the algorithm and was chosen big enough to guarantee the expected competitive ratios.
While succeeding for small $m$ (specifically, for $m \le 7$),
Albers~\cite{Alb02} observed that the analysis of the algorithms in \cite{BFK95,Sei00,Sei03} does not work for general large $m$,
and subsequently proposed the new barely randomized algorithm based on only two deterministic algorithms.

\subsection{Our results}
For the fully online and semi-online multiprocessor scheduling with testing to minimize the makespan,
we have seen that the only existing randomized algorithms are the optimal expected $\frac 43$-competitive algorithms
for the case of a single machine~\cite{DEM18,DEM20,AE20}.
Their expected competitive ratios are strictly smaller than the lower bound of $\varphi$ on the competitive ratio for any deterministic algorithm.
For the classic online multiprocessor scheduling,
only when $2 \le m \le 5$, the expected competitive ratios of the algorithms by Bartal et al.~\cite{BFK95} and Seiden~\cite{Sei00}
are strictly smaller than the corresponding lower bounds on the competitive ratio for any deterministic algorithm;
the expected competitive ratio $1.916$ of the algorithm by Albers~\cite{Alb02}
is also strictly smaller than the deterministic lower bound $1.919$, but {\em conditional} on using only the three offline makespan lower bounds.

In this paper, we always assume the presence of at least two machines in the fully online problem $P \mid online, t_j, 0 \le p_j \le u_j \mid C_{\max}$.
We propose a barely randomized algorithm that is an integration of {\em an arbitrary number} of component deterministic algorithms,
each is run with certain probability, and show that the expected competitive ratio is at most
$\left( \sqrt{(1 - \frac 1m)^2 \varphi^2 + 2 (1 - \frac 1m)} + 1 \right)$, where $m$ is the number of machines.
This expected competitive ratio is strictly increasing in $m$,
is strictly smaller than the competitive ratio $\varphi (2 - \frac 1m)$ of the best known deterministic algorithm by Albers and Eckl~\cite{AE21},
and approaches $(\sqrt{\varphi + 3} + 1) \approx 3.1490$ when $m$ tends to infinity.
To the best of our knowledge, barely randomized algorithms in the literature are mostly uniform distributions of its component deterministic algorithms,
while ours is the {\em first} non-uniform distribution of its {\em arbitrary many} component algorithms.

When there are only two machines, that is, for $P2 \mid online, t_j, 0 \le p_j \le u_j \mid C_{\max}$,
we show that employing two component deterministic algorithms in the randomized algorithm
leads to an expected competitive ratio of $\frac {3 \varphi + 3 \sqrt{13 - 7 \varphi}}4 \approx 2.1839$.
For this two-machine case, Albers and Eckl~\cite{AE21} proved a lower bound of $2.0953$ on the competitive ratio for any deterministic algorithm.
We improve it to $2.2117$,
thus showing that our randomized algorithm beats any deterministic algorithm in terms of the expected competitive ratio.
Such algorithmic results are rarely seen in the literature, except those we reviewed in the above.

It is worth noting that our randomized algorithm performs well for both large and small numbers of machines,
and better than the current best deterministic algorithms for all $m$'s in terms of the expected competitive ratios;
while most randomized multiprocessor scheduling algorithms in the literature work only for either a fixed and typically small number of machines
or for sufficiently large numbers of machines.
For example, the above mentioned Albers' randomized algorithm for the classic online multiprocessor scheduling requires large numbers of machines,
and the algorithms by Bartal et al.~\cite{BFK95} and Seiden~\cite{Sei00} perform well only when there are no more than seven machines.

On the inapproximability, we prove a lower bound of $1.6682$ on the expected competitive ratio of any randomized algorithm
for $P \mid online, t_j, 0 \le p_j \le u_j \mid C_{\max}$ at the presence of at least three machines,
and a lower bound of $1.6522$ on the expected competitive ratio for $P2 \mid online, t_j, 0 \le p_j \le u_j \mid C_{\max}$,
both using a slightly revised Yao's principle~\cite{Yao77}.
These two lower bounds are strictly better than the lower bound of $2 - \frac 1m$ by Albers and Eckl~\cite{AE21}
at the presence of three and two machines, respectively.
All our results are plotted in Figure~\ref{fig00} as well, highlighted with ``this paper''.

The rest of the paper is organized as follows.
In Section 2, we introduce some basic notations and definitions.
Next, in Section 3, we present the randomized algorithm for the fully online problem $P \mid t_j, 0 \le p_j \le u_j \mid C_{\max}$ and its performance analysis,
as well as the slightly modified randomized algorithm for the two-machine case and its performance analysis.
We prove in Section 4 the expected competitive ratio lower bounds of $1.6682$ and $1.6522$ for the problem
at the presence of at least three machines and only two machines, respectively,
and the lower bound of $2.2117$ on the competitive ratio for any deterministic algorithm for $P2 \mid online, t_j, 0 \le p_j \le u_j \mid C_{\max}$.
Lastly, we conclude the paper in Section 5.

\section{Preliminaries}
We study the fully online multiprocessor scheduling with testing to minimize the makespan,
denoted as $P \mid online, t_j, 0 \le p_j \le u_j \mid C_{\max}$, in which the jobs arrive in sequence.
Our goal is to design randomized algorithms with expected competitive ratios that are better than
the competitive ratios of the state-of-the-art deterministic algorithms,
or even further better than the best proven deterministic competitive ratio lower bounds.
The special case where the jobs all arrive at time zero is called semi-online and denoted as $P \mid t_j, 0 \le p_j \le u_j \mid C_{\max}$.
When applicable, we will prove lower bounds for the semi-online variant.

An instance $I$ of multiprocessor scheduling with testing consists of a job set $\mcJ = \{J_1, J_2, \ldots, J_n\}$,
each to be executed on one of a set of $m$ parallel identical machines $\mcM = \{M_1, M_2, \ldots, M_m\}$, 
where $n$ and $m \ge 2$ are part of the input.
Each job $J_j$ arrives with a known upper bound $u_j$ on its processing time $p_j$ and a testing operation of length $t_j$.
The $p_j$ remains unknown until the testing operation is executed.
That is, the scheduler can choose not to test the job $J_j$ but execute it for $u_j$ time,
or to test it for $t_j$ time followed by immediately execute it on the same machine for $p_j$ time.

A deterministic algorithm $A$ makes a binary decision on whether or not to test a job.
Let $p_j^A$ and $\rho_j$ denote the total time spent on the job $J_j$ in the algorithm $A$ and in the optimal offline schedule, respectively.
One sees that $p_j^A = t_j + p_j$ if the job is tested or otherwise $p_j^A = u_j$, and $\rho_j = \min\{u_j, t_j + p_j\}$.
Let $C_j$ denote the {\em completion time} of $J_j$ in the schedule generated by the algorithm $A$.
The objective of the problem is to minimize the makespan $C_{\max} = \max_j C_j$.
We use $C^A(I)$ (or $C^A$ when the instance $I$ is clear from the context) and $C^*(I)$ (or $C^*$, correspondingly)
to denote the makespan by the algorithm $A$ and of the optimal offline schedule for the instance $I$, respectively.
Switching to a more general randomized algorithm $A$, its expected makespan on the instance $I$ is denoted by $E[C^A(I)]$.

\begin{definition}
\label{def01}
{\rm (Competitive ratio)}
For a deterministic algorithm $A$, the competitive ratio is the worst-case ratio 
between the makespan of the schedule produced by the algorithm $A$ and of the optimal offline schedule,
that is, $\sup_{I} \{C^A(I) / C^*(I)\}$ where $I$ runs over all instances of the problem.

For a more general randomized algorithm $A$, its expected competitive ratio is $\sup_{I} \{E[C^A(I)] / C^*(I)\}$.
\end{definition}

Given that the processing time $p_j$ for the job $J_j$ could go to two extremes of $0$ and $u_j$,
a deterministic algorithm often sets up a threshold function on the ratio $r_j = \frac {u_j}{t_j}$ for binary testing decision making.
Intuitively, a randomized algorithm should tests the job $J_j$ with probability $f(r_j)$,
and such a probability function should be increasing in the ratio $r_j$ and $f(r_j) = 0$ for all $r_j \le 1$.
Indeed, D\"{u}rr et al.~\cite{DEM18,DEM20} and Albers and Eckl~\cite{AE20} used the following probability function 
\begin{equation}
\label{eq01}
f(r) = \left\{
\begin{array}{ll}
0, 											& \mbox{ if } r \le 1,\\
\frac{r(r - 1)}{r(r - 1) + 1},	& \mbox{ if } r > 1
\end{array}\right.
\end{equation}
in their optimal expected $\frac 43$-competitive randomized algorithms for the single machine problems
$P1 \mid online, t_j = 1, 0 \le p_j \le u_j \mid C_{\max}$ and $P1 \mid online, t_j, 0 \le p_j \le u_j \mid C_{\max}$, respectively.
Their success unfortunately cannot be easily extended to multiple machines,
since one key fact used in their performance analysis is that, for a single machine,
the expected makespan equals to the sum of the expected processing times of all the jobs.
We show below that, for multiple machines, if a randomized algorithm tests jobs using the probability function in Eq.~(\ref{eq01}), 
then its expected competitive ratio is unbounded when the number of machines tends to infinity.

\begin{lemma}
\label{lemma01}
For $P \mid online, t_j = 1, 0 \le p_j \le u_j \mid C_{\max}$,
if a randomized algorithm tests jobs using the probability function in Eq.~(\ref{eq01}),
then its expected competitive ratio is unbounded when the number of machines tends to infinity.
\end{lemma}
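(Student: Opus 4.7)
The plan is to exhibit, for each sufficiently large $m$, an instance $I_m$ on $m$ machines whose offline optimum is constant but on which any algorithm $A$ that tests according to Eq.~(\ref{eq01}) has expected makespan growing without bound. Concretely, I would take $n = m$ identical jobs with $u_j = u_m$, $t_j = 1$, and true processing time $p_j = 0$, where $u_m \to \infty$ is a slowly growing function of $m$ to be pinned down at the end.

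First, since $p_j = 0$ and $t_j = 1$, the offline optimum tests each of the $m$ jobs on its own machine, giving $C^*(I_m) = 1$. Second, no matter how $A$ assigns the jobs to the $m$ machines, each job is tested independently with probability $f(u_m)$, and if it is not tested it contributes $u_m$ units of processing time to whichever machine it is placed on. Letting $N$ denote the number of untested jobs, $N \sim \mathrm{Binomial}\bigl(m,\, 1 - f(u_m)\bigr)$, and on the event $\{N \geq 1\}$ some machine carries an untested job, which forces the makespan to be at least $u_m$. Consequently,
\[
E[C^A(I_m)] \;\geq\; u_m \bigl(1 - f(u_m)^m\bigr).
\]

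Finally, I would choose $u_m$ so that simultaneously $u_m \to \infty$ and $f(u_m)^m \to 0$. From Eq.~(\ref{eq01}), $1 - f(u_m) = 1/(u_m(u_m - 1) + 1) = \Theta(1/u_m^2)$, so $f(u_m)^m \to 0$ is equivalent to $m/u_m^2 \to \infty$, i.e., $u_m = o(\sqrt{m})$. Taking, for instance, $u_m = m^{1/3}$ satisfies both $u_m \to \infty$ and $u_m = o(\sqrt{m})$, and yields $E[C^A(I_m)] / C^*(I_m) \geq m^{1/3} (1 - o(1)) \to \infty$. The only delicate point is that the lower bound must hold for every assignment rule the algorithm could use; this is automatic because any untested job, wherever it is placed, adds $u_m$ to some machine's load. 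I do not anticipate any further obstacle.
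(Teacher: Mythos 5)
Your proposal is correct and follows essentially the same route as the paper: identical jobs with $p_j=0$, unit tests, and a large common upper bound, so that the offline optimum is $1$ while any untested job forces makespan at least $u$. The only difference is the parameter choice --- the paper sets $u=k$ with $m=k(k-1)+1$ so the probability that some job goes untested tends to $1-1/e$, whereas you take $u_m=m^{1/3}$ so it tends to $1$; both yield an unbounded ratio.
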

\begin{proof}
Consider an instance consisting of $m = k(k-1) + 1$ machines and $n = k(k-1) + 1$ jobs, where $k$ is a positive integer.
For each job $J_j$, $u_j = k$, $t_j = 1$ and $p_j = 0$.

Since $\rho_j = \min \{ u_j, t_j + p_j \} = 1$ for each job $J_j$,
in the optimal offline schedule, $J_j$ is tested and scheduled on the machine $M_j$, leading to a makespan of $1$.

The randomized algorithm tests each job with the same probability $\frac {k(k-1)}{k(k-1) + 1}$.
So the probability that at least one job is untested in the algorithm is 
\[
1 - \left(\frac {k(k-1)}{k(k-1) + 1}\right)^{k(k-1) + 1} = 1 - \left(1 - \frac {1}{k(k-1) + 1}\right)^{k(k-1) + 1}.
\]
Since the processing time of an untested job is $k$, the makespan is at least $k$.
It follows that the expected competitive ratio of the randomized algorithm on this instance is at least 
\[
k - k\left(1 - \frac {1}{k(k-1) + 1}\right)^{k(k-1) + 1}.
\]
When $m = k(k-1) + 1$ tends to infinity, the above becomes $(1 - \frac 1e) k$ and approaches infinity too.
\end{proof}

The proof of Lemma~\ref{lemma01} suggests to some extent that,
when the ratio $r_j$ of $J_j$ is large enough, one should test the job instead of leaving even only a tiny fraction of probability for untesting.
Indeed, later we will see that in our proposed randomized algorithm, there is a threshold on $r_j$'s for absolutely testing jobs.

On the other hand, at the presence of at least two machines,
we are no longer sure which jobs are assigned to each machine due to random job testing decisions if using a job testing probability function as in Eq.~(\ref{eq01}).
Therefore, we discard the randomized algorithm design idea in \cite{DEM18,DEM20,AE20},
but {\em extend} the idea of Albers~\cite{Alb02} to design {\em multiple} component deterministic algorithms $A_0, A_1, \ldots, A_{\ell}$ that complement each other,
where $\ell$ can be made arbitrarily large,
and then to run each $A_i$ with certain probability $\alpha_i$.
Let $p_j^{A_i}$ denote the total time spent on the job $J_j$ in $A_i$, for any $i = 0, 1, \ldots, \ell$.
The expected processing time of $J_j$ in the randomized algorithm $A$ is
\begin{equation}
\label{eq02}
E[p_j^A] = \sum_{i=0}^\ell \alpha_i p_j^{A_i}.
\end{equation}
Similarly, let $C^{A_i}$ be the makespan of the schedule produced by $A_i$, for any $i = 0, 1, \ldots, \ell$.
The expected makespan is
\begin{equation}
\label{eq03}
E[C^A] = \sum_{i=0}^\ell \alpha_i C^{A_i}.
\end{equation}

The following three lemmas hold for any component deterministic algorithms $A_0, A_1, \ldots, A_\ell$
run with the probability distribution $(\alpha_0, \alpha_1, \ldots, \alpha_\ell)$,
while assuming without loss of generality none of them tests any job $J_j$ with $r_j \le 1$.

\begin{lemma}
\label{lemma02}
For any $i = 0, 1, \ldots, \ell$, if $J_j$ with $r_j > 1$ is tested in $A_i$, then $p_j^{A_i} \le (1 + \frac 1{r_j}) \rho_j$.
\end{lemma}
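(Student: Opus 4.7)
The plan is a short case analysis on which of the two expressions attains the minimum in $\rho_j=\min\{u_j,t_j+p_j\}$. Since $J_j$ is tested in $A_i$, we have the identity $p_j^{A_i}=t_j+p_j$ to work from, and the desired inequality rewrites as $t_j+p_j\le\left(1+\tfrac{t_j}{u_j}\right)\rho_j$ using $r_j=u_j/t_j$.

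First, I would handle the easy case $\rho_j=t_j+p_j$. Then $p_j^{A_i}=\rho_j$, and since $\tfrac{1}{r_j}=\tfrac{t_j}{u_j}\ge 0$ the bound is immediate. Second, I would treat the case $\rho_j=u_j$, which by definition of the minimum means $u_j\le t_j+p_j$. Here $p_j^{A_i}=t_j+p_j$ and I must show $t_j+p_j\le\left(1+\tfrac{t_j}{u_j}\right)u_j=u_j+t_j$; this reduces to $p_j\le u_j$, which is part of the input specification. Combining the two cases gives $p_j^{A_i}\le\left(1+\tfrac{1}{r_j}\right)\rho_j$.

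There is no real obstacle here; the lemma is essentially a normalization bookkeeping step that extracts the worst-case overhead of testing (namely the testing time $t_j$) in units of $\rho_j$. The only thing one must be careful about is that the hypothesis $r_j>1$ is not actually needed for the inequality itself (both cases above go through for any $r_j>0$); it is there to remain consistent with the standing convention, noted right before the lemma, that $A_i$ never tests a job with $r_j\le 1$, so that the phrase ``$J_j$ with $r_j>1$ is tested'' is the only meaningful situation.
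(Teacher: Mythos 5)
Your proposal is correct and follows essentially the same two-case argument as the paper: the case $\rho_j = t_j + p_j$ is immediate, and the case $\rho_j = u_j$ reduces to $p_j \le u_j$ via $t_j = u_j / r_j$. Your side remark that the hypothesis $r_j > 1$ is not needed for the inequality itself is also accurate.
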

\begin{proof}
Note that $p_j^{A_i} = t_j + p_j$.
If $\rho_j = t_j + p_j$, then $p_j^{A_i} = \rho_j$ and the lemma is proved.
Otherwise, $\rho_j = u_j$;
then from $p_j \le u_j$ and $u_j = r_j t_j$, we have 
$p_j^{A_i} = t_j + p_j \le (\frac 1{r_j} + 1) u_j = (\frac 1{r_j} + 1) \rho_j$ and the lemma is also proved.
\end{proof}

\begin{lemma}
\label{lemma03}
For any $i = 0, 1, \ldots, \ell$, if $J_j$ with $r_j > 1$ is untested in $A_i$, then $p_j^{A_i} \le r_j \rho_j$.
\end{lemma}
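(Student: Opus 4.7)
The plan is to mirror the two-case argument used in the proof of Lemma~\ref{lemma02}, but starting from the fact that an untested job contributes $p_j^{A_i}=u_j$ rather than $t_j+p_j$ to the schedule. So the first step would be to write down $p_j^{A_i}=u_j$, then split on the value of $\rho_j = \min\{u_j, t_j+p_j\}$.

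In the easy case $\rho_j = u_j$, the bound $p_j^{A_i} \le r_j \rho_j$ is immediate because $r_j > 1$ gives $r_j \rho_j = r_j u_j \ge u_j = p_j^{A_i}$. In the other case $\rho_j = t_j + p_j$, I would use the defining identity $u_j = r_j t_j$ to rewrite
\[
r_j \rho_j = r_j t_j + r_j p_j = u_j + r_j p_j \ge u_j = p_j^{A_i},
\]
where the inequality uses $p_j \ge 0$ and $r_j > 0$. Combining the two cases yields the claim.

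There is essentially no obstacle here: the statement is a direct consequence of the definitions of $r_j$ and $\rho_j$ and the convention that an untested job costs exactly $u_j$ time. The only thing worth flagging is that, unlike Lemma~\ref{lemma02}, the bound here does not tend to $\rho_j$ as $r_j$ grows; instead it blows up linearly in $r_j$, which is precisely the quantitative reason why the later algorithms will want to enforce a threshold on $r_j$ above which testing becomes mandatory.
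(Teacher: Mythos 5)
Your proof is correct and follows essentially the same two-case argument as the paper: both split on whether $\rho_j = u_j$ or $\rho_j = t_j + p_j$, use $r_j > 1$ in the first case, and use $u_j = r_j t_j$ together with $p_j \ge 0$ in the second. No issues.
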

\begin{proof}
Note that $p_j^{A_i} = u_j$.
If $\rho_j = u_j$, then $p_j^{A_i} = \rho_j < r_j \rho_j$ and the lemma is proved.
Otherwise, $\rho_j = t_j + p_j$;
then from $p_j \ge 0$ and $u_j = r_j t_j$, we have 
$p_j^{A_i} = r_j t_j \le r_j (t_j + p_j) = r_j \rho_j$ and the lemma is also proved.
\end{proof}

\begin{lemma}
\label{lemma04}
If $J_j$ with $r_j > 1$ is tested in a subset ${\cal T}$ of the algorithms but untested in any of $\{A_0, A_1, \ldots, A_\ell\}\setminus {\cal T}$, then
\[
E[p_j^A] \le \max\left\{ \theta + (1-\theta) r_j, 1 + \frac {\theta}{r_j} \right\}\rho_j, \mbox{ where } \theta = \sum_{A_i \in {\cal T}} \alpha_i.
\]
\end{lemma}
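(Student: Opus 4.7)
The plan is to unfold the definition \(E[p_j^A]=\sum_i\alpha_i p_j^{A_i}\) using the hypothesis: for every $A_i\in{\cal T}$ we have $p_j^{A_i}=t_j+p_j$ and for every $A_i\notin{\cal T}$ we have $p_j^{A_i}=u_j$. Grouping the two kinds of terms gives the clean identity
\[
E[p_j^A] \;=\; \theta(t_j+p_j) + (1-\theta)u_j,
\]
which is a convex combination of the two possible per-instance costs. From here the bound will fall out of a two-case split on which of $u_j$ and $t_j+p_j$ realizes $\rho_j=\min\{u_j,t_j+p_j\}$, essentially re-using the one-line arguments already made in Lemma~\ref{lemma02} and Lemma~\ref{lemma03}.

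In the first case, when $\rho_j=u_j$, I would use $p_j\le u_j$ and $u_j=r_j t_j$ to conclude $t_j+p_j\le t_j+u_j=(1+\tfrac{1}{r_j})u_j=(1+\tfrac{1}{r_j})\rho_j$, while the untested contribution is exactly $u_j=\rho_j$. Substituting into the identity above yields
\[
E[p_j^A] \;\le\; \theta\Bigl(1+\tfrac{1}{r_j}\Bigr)\rho_j + (1-\theta)\rho_j \;=\; \Bigl(1+\tfrac{\theta}{r_j}\Bigr)\rho_j,
\]
which matches the second argument of the max. In the second case, when $\rho_j=t_j+p_j$, the tested contribution is exactly $\rho_j$, and $u_j=r_j t_j\le r_j(t_j+p_j)=r_j\rho_j$ (using $p_j\ge 0$), so
\[
E[p_j^A] \;\le\; \theta\rho_j + (1-\theta)r_j\rho_j \;=\; \bigl(\theta+(1-\theta)r_j\bigr)\rho_j,
\]
which matches the first argument of the max. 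Taking the maximum of the two case bounds gives the stated inequality.

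I expect no real obstacle; the whole argument is a one-line convex combination followed by the same case analysis that produced Lemmas~\ref{lemma02} and~\ref{lemma03}. The only mild care is to note that the two bounds in the max are not simultaneously applicable to a single instance — each realization of $(u_j,t_j,p_j)$ falls in exactly one case — so the stated $\max$ on the right-hand side is a uniform upper envelope rather than something that has to be optimized over $\theta$. This observation is also what makes the bound useful later: it depends only on $r_j$ and $\theta$, and not on the unknown split between $u_j$ and $t_j+p_j$.
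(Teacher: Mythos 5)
Your proposal is correct and follows essentially the same route as the paper's own proof: write $E[p_j^A]=\theta(t_j+p_j)+(1-\theta)u_j$, then case-split on whether $\rho_j$ equals $u_j$ or $t_j+p_j$, reusing the bounds from Lemmas~\ref{lemma02} and~\ref{lemma03}. No gaps; your closing remark that the two branches of the max correspond to mutually exclusive realizations is a correct reading of why the bound is stated as a uniform envelope.
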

\begin{proof}
By Eq.~(\ref{eq02}), $E[p_j^A] = \theta (t_j + p_j) + (1-\theta) u_j$.

If $\rho_j = t_j + p_j$, then $p_j^{A_i} = \rho_j$ for any $A_i \in {\cal T}$, and by Lemma~\ref{lemma03} $p_j^{A_i} \le r_j \rho_j$ for any $A_i \notin {\cal T}$;
therefore, $E[p_j^A] \le (\theta + (1-\theta) r_j)\rho_j$.
Otherwise, $\rho_j = u_j$;
then $p_j^{A_i} = \rho_j$ for any $A_i \notin {\cal T}$, and by Lemma~\ref{lemma02} $p_j^{A_i} \le (1 + \frac 1{r_j}) \rho_j$ for any $A_i \in {\cal T}$,
and therefore $E[p_j^A] \le \theta (1 + \frac 1{r_j}) \rho_j + (1 - \theta) \rho_j = (1 + \frac {\theta}{r_j})\rho_j$.
This proves the lemma.
\end{proof}

\begin{lemma}{\rm \cite{Alb02}}
\label{lemma05}
The following two lower bounds on the optimal offline makespan $C^*$ hold:
\begin{equation}
\label{eq04}
C^* \ge \max \left\{\frac1m \sum_{j=1}^n \rho_j, \max_{j=1}^n \rho_j\right\}.
\end{equation}
\end{lemma}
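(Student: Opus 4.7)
The plan is to establish each of the two lower bounds separately by an elementary workload argument, using the definition $\rho_j = \min\{u_j, t_j + p_j\}$ supplied earlier in the Preliminaries. The key preliminary observation is that on whichever machine the optimal offline schedule assigns job $J_j$, it must commit at least $\rho_j$ time to that job: the only two ways to handle $J_j$ are to execute it untested for $u_j$ time or to test-then-execute for $t_j + p_j$ time, and $\rho_j$ is the smaller of these two. Hence $\rho_j$ is a valid per-job lower bound on the time the optimum spends on $J_j$.

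For the first bound $C^* \ge \frac{1}{m} \sum_{j=1}^n \rho_j$, I would sum the per-job bound over all $n$ jobs to get that the total workload distributed over the $m$ machines in the optimal schedule is at least $\sum_{j=1}^n \rho_j$. Since the makespan is at least the maximum machine load, and the maximum load is at least the average, the bound follows. For the second bound $C^* \ge \max_{j=1}^n \rho_j$, I would fix any job $J_j$ and note that its completion time in the optimum is at least the time spent processing it, which is at least $\rho_j$. Since $C^*$ dominates every individual completion time, taking the maximum over $j$ yields the claim. Combining the two inequalities by taking the maximum gives the lemma.

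There is essentially no obstacle: the proof is a direct unpacking of the definitions of $\rho_j$, the makespan, and the averaging principle on $m$ machines. The only point that merits careful phrasing is the justification that $\rho_j$ lower-bounds the offline processing time of $J_j$ irrespective of which testing decision the offline optimum makes, and this follows immediately from the definition of $\rho_j$ as the minimum of the two available options.
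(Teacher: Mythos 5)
Your proof is correct; the paper gives no proof of its own for this lemma (it is cited from Albers~\cite{Alb02}), and your workload/averaging argument via the observation that any schedule spends at least $\rho_j = \min\{u_j, t_j + p_j\}$ on job $J_j$ is exactly the standard justification the paper is implicitly relying on.
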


In the fully online problem $P \mid online, t_j, 0 \le p_j \le u_j \mid C_{\max}$,
the jobs arrive in sequence and it is assumed to be $\langle J_1, J_2, \dots, J_n\rangle$.
In the semi-online problem $P \mid t_j, 0 \le p_j \le u_j \mid C_{\max}$,
all the jobs arrive at time zero and an algorithm can take advantage of all the known $u_j$'s and $t_j$'s to process them in any order.

\section{Randomized algorithms} 
In this section, we first present a randomized algorithm for
the fully online multiprocessor scheduling with testing problem $P \mid online, t_j, 0 \le p_j \le u_j \mid C_{\max}$
at the presence of at least two machines,
and show that its expected competitive ratio is $(\sqrt{\varphi + 3} + 1) \approx 3.1490$.
When there are only two machines, we revise slightly the parameters in the algorithm leading to
an improved expected competitive ratio of $\frac {3 \varphi + 3 \sqrt{13-7\varphi}}4 \approx 2.1839$.

For the fully online multiprocessor scheduling with testing problem $P \mid online, t_j, 0 \le p_j \le u_j \mid C_{\max}$,
Albers and Eckl~\cite{AE21} proposed to test the job $J_j$ if $r_j \ge \varphi$ in the list scheduling rule~\cite{Gra66}
that assigns each job to the least loaded machine for processing (possible testing and then executing).
They showed that such a deterministic algorithm is tight $\varphi (2 - \frac 1m)$-competitive, where $m$ is the number of machines.
We employ it as our first component algorithm $A_0$,
and it is run with probability $\alpha_0 = \alpha(m, \ell)$ as in Eq.~(\ref{eq05}), where $\ell \ge 1$ is a fixed constant chosen by our randomized algorithm.
\begin{equation}
\label{eq05}
\alpha_0 = \alpha(m, \ell) = \sqrt{\frac {(1-\frac 1m) (\ell+1) \varphi^2}{(1-\frac 1m) (\ell+1) \varphi^2 + 2\ell}}, \ 
	\mbox{ and }
\alpha_i = \beta(m, \ell) = \frac {1-\alpha(m, \ell)}\ell, \mbox{ for } i = 1, 2, \ldots, \ell.
\end{equation}
Below we simplify $\alpha(m, \ell)$ and $\beta(m, \ell)$ as $\alpha$ and $\beta$, respectively.

One sees that when $r_j \le 1$, the job $J_j$ should not be tested, as this is the case in $A_0$.
On the other hand, when $r_j$ is large, testing does not waste too much time even if $p_j$ is very close to $u_j$, as this is also the case in $A_0$.
The ``risk'' of making wrong testing decision happens when $r_j$ is close to $\varphi$,
in that $p_j$ could be very close to $0$ if $J_j$ is untested while be very close to $u_j$ if tested.
Nevertheless, we have also seen from the proof of Lemma~\ref{lemma01} that
even a tiny fraction of probability for untesting jobs with large ratio $r_j$ could lead to an unbounded expected competitive ratio,
when the number $m$ of machines tends to infinity.
Therefore, in each of the other component algorithms $A_i$, $i = 1, 2, \ldots, \ell$,
we set up a threshold $y_i(m, \ell) > \varphi$ as in Eq.~(\ref{eq06}) such that a job $J_j$ with $r_j \ge y_i(m, \ell)$ is tested in $A_i$;
we then set up another threshold $x_i(m, \ell) = 1 + \frac 1{y_i(m, \ell)}$ such that a job $J_j$ with $r_j \le x_i(m, \ell)$ is untested in $A_i$.
\begin{equation}
\label{eq06}
y_i(m, \ell) = \frac {\varphi (\alpha + i \beta)}{\alpha}, \ 
x_i(m, \ell) = 1 + \frac 1{y_i(m, \ell)}, \mbox{ for } i = 0, 1, \ldots, \ell.
\end{equation}
Similarly, below we simplify $y_i(m, \ell)$ and $x_i(m, \ell)$ as $y_i$ and $x_i$, respectively.
When $r_j \in (x_i, y_i)$, we invert the testing decision as in $A_0$ to test $J_j$ if $r_j \le \varphi$ or untest $J_j$ if $r_j > \varphi$ in $A_i$.
Our randomized algorithm, denoted as {\sc GCL}, chooses to run $A_i$ with probability $\alpha_i$, for $i = 0, 1, \ldots, \ell$.

From Eqs.~(\ref{eq05}, \ref{eq06}), one sees that $\alpha, \beta > 0$, and
\begin{equation}
\label{eq07}
1 < x_\ell < \ldots < x_1 < x_0 = \varphi = y_0 < y_1 < \ldots < y_\ell = \frac {\varphi}{\alpha}.
\end{equation}
A high-level description of the algorithm {\sc GCL} is presented in Figure~\ref{GCL}.

\begin{figure}[htb]
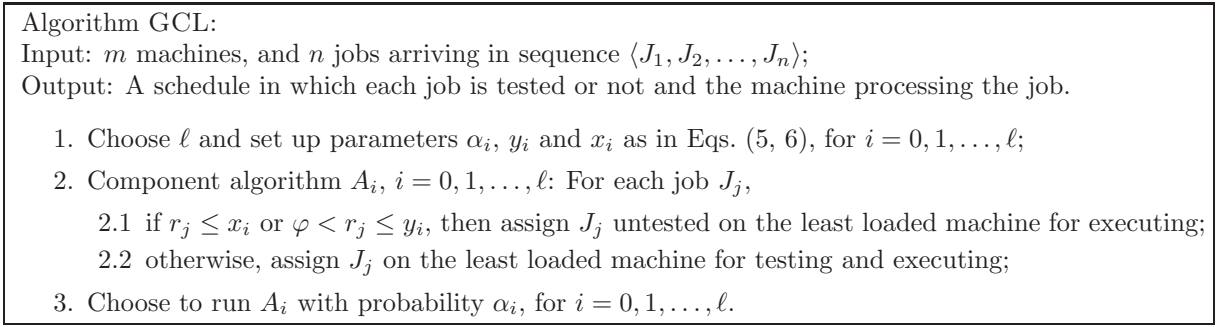

\begin{center}
\framebox{
\begin{minipage}{6.2in}
Algorithm {\sc GCL}:\\
Input:  $m$ machines, and $n$ jobs arriving in sequence $\langle J_1, J_2, \ldots, J_n\rangle$;\\
Output: A schedule in which each job is tested or not and the machine processing the job.
\begin{enumerate}
\parskip=0pt
\item
	Choose $\ell$ and set up parameters $\alpha_i$, $y_i$ and $x_i$ as in Eqs.~(\ref{eq05}, \ref{eq06}), for $i = 0, 1, \ldots, \ell$;
\item
	Component algorithm $A_i$, $i = 0, 1, \ldots, \ell$: For each job $J_j$,
	\begin{itemize}
	\parskip=0pt
	\item[2.1]
		if $r_j \le x_i$ or $\varphi < r_j \le y_i$, then assign $J_j$ untested on the least loaded machine for executing;
   \item[2.2]
     otherwise, assign $J_j$ on the least loaded machine for testing and executing;
	\end{itemize}
\item
   Choose to run $A_i$ with probability $\alpha_i$, for $i = 0, 1, \ldots, \ell$.
\end{enumerate}
\end{minipage}}
\end{center}
\caption{A high level description of the algorithm {\sc GCL}.\label{GCL}}
\end{figure}

The inequalities in the next lemma will be convenient in the performance analysis for the algorithm {\sc GCL}.

\begin{lemma}
\label{lemma08}
$\frac {1-\alpha}{x_\ell} < \frac {\alpha}{\varphi}$,
	\mbox{ and }
$(\ell - i) \beta (y_{i+1} - 1) < \frac {\alpha}{\varphi}$ for every $i = 0, 1, \ldots, \ell-1$.
\end{lemma}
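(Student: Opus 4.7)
The plan is to handle each inequality by reducing it to a simpler algebraic condition on $\alpha$, and then verifying that condition using the closed form in~(\ref{eq05}).

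For the first inequality, I would exploit the identity $\alpha + \ell\beta = 1$, which is immediate from the definition $\beta = (1-\alpha)/\ell$, to obtain $y_\ell = \varphi/\alpha$ and hence $x_\ell = 1 + \alpha/\varphi$. Clearing denominators in $(1-\alpha)/x_\ell < \alpha/\varphi$ and combining using $\varphi^2 = \varphi + 1$ (so that $\alpha\varphi^2 + \alpha\varphi = \alpha\varphi^3$) collapses the target to the quadratic inequality $\alpha^2 + \alpha\varphi^3 > \varphi^2$. A clean sufficient condition is $\alpha\varphi > 1$, i.e., $\alpha^2 > 1/\varphi^2$. Plugging the closed form for $\alpha^2$ into this condition, it becomes $(1-\tfrac{1}{m})(\ell+1)\varphi^3 > 2\ell$, which holds because $1-\tfrac{1}{m} \ge 1/2$ and $\varphi^3 > 4 \ge 4\ell/(\ell+1)$ for every $\ell \ge 1$.

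For the second inequality, I would first use $\varphi - 1 = 1/\varphi$ to rewrite $y_{i+1} - 1 = 1/\varphi + \varphi(i+1)\beta/\alpha$, so that
\[
(\ell-i)\beta(y_{i+1}-1) = \frac{(\ell-i)\beta}{\varphi} + \frac{(\ell-i)(i+1)\varphi\beta^2}{\alpha}.
\]
This is a concave quadratic in $i$ (leading coefficient $-\varphi\beta^2/\alpha$), so I would bound it by its value at the unconstrained vertex $i^* = (\ell-1)/2 - \alpha/(2\varphi^2\beta)$. A direct computation gives the vertex value $(\ell+1)\beta/(2\varphi) + \alpha/(4\varphi^3) + (\ell+1)^2\varphi\beta^2/(4\alpha)$. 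The key step, and what I expect to be the main obstacle, is to notice that after multiplying the target (this vertex value being less than $\alpha/\varphi$) through by $4\varphi\alpha$, the three resulting LHS terms assemble into a perfect square:
\[
(\ell+1)^2\varphi^2\beta^2 + 2\alpha(\ell+1)\beta + \alpha^2/\varphi^2 = \left[(\ell+1)\varphi\beta + \alpha/\varphi\right]^2.
\]

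Taking square roots (both sides positive) then reduces the target to $(\ell+1)\varphi^2\beta < \alpha(2\varphi - 1)$. Substituting $\beta = (1-\alpha)/\ell$ and rearranging, this is exactly $\alpha > P/(P+Q)$ with $P = \varphi^2(\ell+1)$ and $Q = \ell(2\varphi-1)$. Squaring and applying $\alpha^2 = qP/(qP+2\ell)$ with $q = 1-\tfrac{1}{m}$, cross-multiplication collapses everything to $q(2\varphi-1)(2P+Q) > 2P$. Since $2\varphi - 1 = \sqrt{5}$ and $q \ge 1/2$, the coefficient $q(2\varphi-1) \ge \sqrt{5}/2 > 1$, and therefore $q(2\varphi-1)(2P+Q) > 2P+Q > 2P$, closing the argument.
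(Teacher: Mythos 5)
Your proof is correct, and for the second inequality it takes a genuinely different and arguably cleaner route than the paper's. For the first inequality the two arguments are essentially the same: your sufficient condition $\alpha\varphi>1$ is identical to the paper's $\alpha>\varphi-1$ (since $1/\varphi=\varphi-1$); the paper gets it by noting $\alpha(m,\ell)$ is increasing in $m$ and evaluating at $m=2$, while you verify it directly from the closed form via $(1-\tfrac1m)(\ell+1)\varphi^3>2\ell$ --- both are fine. The real divergence is in the second inequality. The paper also observes that $g(i)=(\ell-i)\beta(y_{i+1}-1)$ is a concave quadratic, but then splits into the cases $\ell=1$, $\ell=2$, and $\ell\ge 3$, and for $\ell\ge 3$ runs a two-step argument: a direct bound $g(i)\le 1-\alpha<2-\varphi<\alpha/\varphi$ for $i\le(2-\varphi)\ell-1$, plus a verification that this cutoff lies to the right of the axis of symmetry so that $g$ is decreasing beyond it. You instead bound $g$ uniformly by its value at the unconstrained vertex and show that value is below $\alpha/\varphi$; the key observation that, after clearing denominators by $4\varphi\alpha$, the vertex value assembles into the perfect square $\bigl[(\ell+1)\varphi\beta+\alpha/\varphi\bigr]^2$ reduces everything to the linear condition $(\ell+1)\varphi^2\beta<\alpha(2\varphi-1)$, which you then check from the closed form of $\alpha$ via $q(2\varphi-1)(2P+Q)>2P$ with $q(2\varphi-1)\ge\sqrt5/2>1$. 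I verified the vertex value, the perfect-square identity, and the final cross-multiplication, and all steps are sound (bounding a concave quadratic by its unconstrained maximum is valid regardless of whether the vertex lies in $[0,\ell-1]$). What your approach buys is uniformity in $\ell$ --- no case analysis --- and it in fact shows the slightly stronger statement that $\max_i g(i)<\alpha/\varphi$ over all real $i$; what the paper's approach buys is that it avoids computing the vertex value explicitly, at the cost of the $\ell\le 2$ special cases and the cutoff comparison.
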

\begin{proof}
By its definition in Eq.~(\ref{eq05}), $\alpha$ is strictly increasing in $m$.
Therefore,
\begin{equation}
\label{eq08}
\sqrt{\frac {(\ell + 1) \varphi^2}{(\ell + 1) \varphi^2 + 2\ell}} > \alpha = \alpha(m, \ell)
	\ge \alpha(2, \ell) = \sqrt{\frac {(\ell + 1) \varphi^2}{(\ell + 1) \varphi^2 + 4\ell}}
	\ge \sqrt{\frac {\varphi^2}{\varphi^2 + 4}} > \varphi - 1.
\end{equation}
Using $y_\ell = \frac {\varphi}{\alpha}$ and $x_\ell = 1 + \frac {\alpha}{\varphi}$, we have $x_\ell > 1 + \frac {\varphi-1}{\varphi} = 3 - \varphi$.
It follows that $(\varphi + x_\ell) \alpha > 3 (\varphi - 1) > \varphi$, and thus
\[
\frac {\alpha}{\varphi} - \frac {1-\alpha}{x_\ell} = \frac {(\varphi + x_\ell) \alpha - \varphi}{\varphi x_\ell} > 0,
\]
which is the first inequality.

Define $g(i) = (\ell - i) \beta (y_{i+1} - 1)$ for $i = 0, 1, \ldots, \ell-1$, which is a concave quadratic function in $i$
and its axis of symmetry is $i = i_0 = \frac 12 (\ell - 1 - \frac {\alpha}{(\varphi + 1) \beta})$.
Note that when $\ell = 1$, Eq.~(\ref{eq08}) implies $\sqrt{\frac {\varphi^2}{\varphi^2 + 1}} > \alpha \ge \sqrt{\frac {\varphi^2}{\varphi^2 + 2}}$;
thus by Eq.~(\ref{eq06}), we have
$g(0) = \beta (\frac {\varphi}{\alpha} - 1) = (1 - \alpha) (\frac {\varphi}{\alpha} - 1) < \frac {\alpha}{\varphi}$ and the lemma is proved.
When $\ell = 2$, Eq.~(\ref{eq08}) implies $\sqrt{\frac {3\varphi^2}{3\varphi^2 + 4}} > \alpha \ge \sqrt{\frac {3\varphi^2}{3\varphi^2 + 8}}$;
thus by Eq.~(\ref{eq06}), we have
$g(i) \le 2 \beta (\frac {\varphi}{\alpha} - 1) = (1 - \alpha) (\frac {\varphi}{\alpha} - 1) < \frac {\alpha}{\varphi}$ for $i = 0, 1$ and the lemma is proved.

Below we assume that $\ell \ge 3$.
Note that when $i \le (2 - \varphi) \ell - 1$, Eqs.~(\ref{eq05}, \ref{eq06}, \ref{eq08}) together imply
\[
y_{i+1} \le \frac {\varphi (\alpha + (2 - \varphi)\ell \beta)}{\alpha} = \frac {\alpha + \varphi-1}{\alpha} < 2
\]
and thus
\[
g(i) \le (\ell - i) \beta \le 1 - \alpha < 2 - \varphi = \frac {\varphi-1}{\varphi} < \frac {\alpha}{\varphi}.
\]
Using $\beta = \frac 1{\ell} (1 - \alpha) \le \frac 13 (1 - \alpha)$ and Eq.~(\ref{eq08}), one sees that
\[
(2 - \varphi) \ell - 1 - i_0 = \frac {\varphi \alpha - (\varphi - 1) - (\varphi + 1) \beta}{2 (\varphi + 1) \beta}
	> \frac {5 - 3 \varphi}{6 (\varphi + 1) \beta} > 0,
\]
that is, $(2 - \varphi) \ell - 1$ is to the right of the axis of symmetry.
Therefore, for every $i = 0, 1, \ldots, \ell-1$, $g(i) < \frac {\alpha}{\varphi}$.
This completes the proof of the lemma.
\end{proof}

\begin{lemma}
\label{lemma09}
In the algorithm $A_i$, $p_j^{A_i}\le y_i \rho_j$ for every job $J_j$, for any $i = 0, 1, \ldots, \ell$.
\end{lemma}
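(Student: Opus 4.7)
The plan is to argue by a direct case analysis on (i) whether $J_j$ is tested in $A_i$ and (ii) which of the two qualifying intervals $r_j$ lies in, invoking Lemma~\ref{lemma02} or Lemma~\ref{lemma03} in each case. The single algebraic identity that drives everything is $y_i \ge \varphi$, which is equivalent to $y_i^2 - y_i - 1 \ge 0$, i.e., to $1 + \tfrac{1}{y_i} \le y_i$; note this also yields $x_i = 1 + \tfrac{1}{y_i} \le \varphi \le y_i$ for all $i = 0,1,\ldots,\ell$.

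First I would recall, from Step~2.1 of Figure~\ref{GCL}, that $A_i$ leaves $J_j$ untested precisely when $r_j \in [0,x_i] \cup (\varphi, y_i]$ and tests $J_j$ precisely when $r_j \in (x_i,\varphi] \cup (y_i,\infty)$. I then dispose of the trivial range $r_j \le 1$ at once using the standing assumption that no algorithm tests such a job, giving $p_j^{A_i} = u_j = \rho_j \le y_i \rho_j$.

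Next I treat the four nontrivial subcases. If $J_j$ is untested and $1 < r_j \le x_i$, Lemma~\ref{lemma03} gives $p_j^{A_i} \le r_j \rho_j \le x_i \rho_j \le y_i \rho_j$. If $J_j$ is untested and $\varphi < r_j \le y_i$, Lemma~\ref{lemma03} gives $p_j^{A_i} \le r_j \rho_j \le y_i \rho_j$ directly. If $J_j$ is tested and $r_j > y_i$, Lemma~\ref{lemma02} gives $p_j^{A_i} \le (1 + \tfrac{1}{r_j}) \rho_j < (1 + \tfrac{1}{y_i}) \rho_j = x_i \rho_j \le y_i \rho_j$.

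The main obstacle, and the only nontrivial step, is the remaining case where $J_j$ is tested with $x_i < r_j \le \varphi$ (which forces $i \ge 1$). Here Lemma~\ref{lemma02} yields $p_j^{A_i} \le (1 + \tfrac{1}{r_j}) \rho_j$, and I must verify $1 + \tfrac{1}{r_j} \le y_i$, i.e., $r_j \ge \tfrac{1}{y_i - 1}$. Since $r_j > x_i = \tfrac{y_i + 1}{y_i}$, it suffices to check $\tfrac{y_i + 1}{y_i} \ge \tfrac{1}{y_i - 1}$; cross-multiplying (both sides positive because $y_i > \varphi > 1$) reduces this to $y_i^2 - y_i - 1 \ge 0$, which is exactly the defining property $y_i \ge \varphi$ from Eq.~(\ref{eq07}). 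This closes the last case and completes the proof.
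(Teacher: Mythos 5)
Your proof is correct and follows essentially the same route as the paper's: a case analysis on the testing decision of $A_i$ driven by Lemmas~\ref{lemma02} and~\ref{lemma03}, with the single nontrivial step being that $y_i \ge \varphi$ (equivalently $y_i^2 - y_i - 1 \ge 0$) forces $1 + \tfrac{1}{x_i} = 1 + \tfrac{y_i}{y_i+1} \le y_i$. The paper merely compresses your five subcases into the bound $p_j^{A_i} \le \max\{y_i,\, 1 + \tfrac 1{x_i}\}\rho_j = y_i\rho_j$, so no substantive difference remains.
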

\begin{proof}
In the algorithm $A_0$, a job $J_j$ is untested if and only if $r_j \le \varphi$.
By Lemmas~\ref{lemma02} and \ref{lemma03}, $p_j^{A_0} \le \max\{ 1 + \frac 1{\varphi}, \varphi\} \rho_j = y_0 \rho_j$.

Consider the algorithm $A_i$, for any $i = 1, 2, \ldots, \ell$.
If $r_j \le x_i$ or $\varphi < r_j \le y_i$, then $J_j$ is untested in $A_i$ and, by Lemma~\ref{lemma03}, $p_j^{A_i} \le y_i \rho_j$.
Otherwise, $x_i < r_j \le \varphi$ or $r_j > y_i$, $J_j$ is tested and, by Lemma~\ref{lemma02}, $p_j^{A_i} \le (1 + \frac 1{x_i}) \rho_j$.
In conclusion, for each job $J_j$, we have 
\[
p_j^{A_i} \le \max \left\{ y_i, 1 + \frac 1{x_i} \right\} \rho_j = \max \left\{ y_i, 1 + \frac {y_i}{y_i + 1} \right\} \rho_j = y_i \rho_j,
\]
since $y_i \ge \varphi$.
This proves the lemma.
\end{proof}

\begin{lemma}
\label{lemma10}
In the algorithm {\sc GCL}, $E[p_j^A] \le x_\ell \rho_j$ for every job $J_j$.
\end{lemma}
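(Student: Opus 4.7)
The plan is to split the analysis according to where $r_j$ lies relative to the thresholds $1,\ x_\ell < x_{\ell-1} < \cdots < x_0 = \varphi = y_0 < y_1 < \cdots < y_\ell$, determine which component algorithms test $J_j$ from the rule in Step 2 of {\sc GCL}, and then apply Lemma~\ref{lemma04} (together with Lemmas~\ref{lemma02},~\ref{lemma03} at the endpoints) to bound $E[p_j^A]$ by $x_\ell \rho_j = (1+\alpha/\varphi)\rho_j$.

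First I would dispose of the three easy ranges. If $r_j \le 1$, no component tests $J_j$ (by assumption), and since $\rho_j = u_j$ in this case, $E[p_j^A] = \rho_j \le x_\ell \rho_j$ because $x_\ell > 1$. If $1 < r_j \le x_\ell$, then $r_j \le x_i$ for every $i$, so every $A_i$ untests $J_j$, and Lemma~\ref{lemma03} gives $E[p_j^A] \le r_j \rho_j \le x_\ell \rho_j$. If $r_j > y_\ell$, then $r_j > y_i$ for every $i \ge 1$ and $r_j > \varphi$, so every $A_i$ tests $J_j$, and Lemma~\ref{lemma02} gives $E[p_j^A] \le (1+1/r_j)\rho_j < (1+1/y_\ell)\rho_j = (1+\alpha/\varphi)\rho_j = x_\ell \rho_j$.

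Next, for $r_j \in (x_{k+1}, x_k]$ with $k \in \{0,1,\ldots,\ell-1\}$, a check of the testing rule shows that exactly $A_{k+1}, A_{k+2}, \ldots, A_\ell$ test $J_j$, so $\theta := (\ell-k)\beta$. Lemma~\ref{lemma04} yields $E[p_j^A] \le \max\{\theta + (1-\theta) r_j,\ 1 + \theta/r_j\}\rho_j$. For the first term, using $r_j \le x_k = 1 + 1/y_k$ and the identity $(\alpha + k\beta)/y_k = \alpha/\varphi$, a direct computation gives $\theta + (1-\theta) x_k = (\ell-k)\beta + (\alpha+k\beta) + \alpha/\varphi = \ell\beta + \alpha + \alpha/\varphi = 1 + \alpha/\varphi = x_\ell$. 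For the second term, since $r_j > x_{k+1} \ge x_\ell$, I get $\theta/r_j \le \ell\beta/x_\ell = (1-\alpha)/x_\ell$, which is strictly less than $\alpha/\varphi$ by the first inequality of Lemma~\ref{lemma08}, so $1 + \theta/r_j < x_\ell$. Symmetrically, for $r_j \in (y_k, y_{k+1}]$ with $k \in \{0,1,\ldots,\ell-1\}$, exactly $A_0, A_1, \ldots, A_k$ test $J_j$, so $\theta = \alpha + k\beta$ and $1-\theta = (\ell-k)\beta$. Here $r_j > y_k$ combined with $(\alpha+k\beta)/y_k = \alpha/\varphi$ yields $1 + \theta/r_j < 1 + \alpha/\varphi = x_\ell$, while $r_j \le y_{k+1}$ combined with the second inequality of Lemma~\ref{lemma08}, namely $(\ell-k)\beta(y_{k+1}-1) < \alpha/\varphi$, gives $\theta + (1-\theta) y_{k+1} = (\alpha+k\beta) + (\ell-k)\beta y_{k+1} < (\alpha+k\beta) + (\ell-k)\beta + \alpha/\varphi = 1 + \alpha/\varphi = x_\ell$.

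The main obstacle is the bookkeeping in the two intermediate families: for each sub-interval $(x_{k+1}, x_k]$ or $(y_k, y_{k+1}]$, one must correctly read off the subset of component algorithms that test $J_j$ from Step 2 of {\sc GCL}, and then recognize that the two inequalities of Lemma~\ref{lemma08} are exactly calibrated to push both expressions appearing in the max of Lemma~\ref{lemma04} below $x_\ell \rho_j$. Beyond this, everything reduces to the telescoping identity $\ell\beta = 1-\alpha$ together with $(\alpha + k\beta)/y_k = \alpha/\varphi$, which makes the additive bounds collapse exactly to $1 + \alpha/\varphi = x_\ell$.
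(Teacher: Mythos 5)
Your proof is correct and follows essentially the same route as the paper: identify, for each interval of $r_j$ determined by the thresholds $x_i$ and $y_i$, exactly which component algorithms test $J_j$, then apply Lemma~\ref{lemma04} and use the identity $(\alpha+k\beta)/y_k = \alpha/\varphi$ together with the two inequalities of Lemma~\ref{lemma08} to push both terms of the max down to $x_\ell$. The only (immaterial) difference is that you handle the extreme ranges $r_j \le x_\ell$ and $r_j > y_\ell$ via Lemmas~\ref{lemma02} and~\ref{lemma03} directly rather than as the $\theta\in\{0,1\}$ cases of Lemma~\ref{lemma04}.
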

\begin{proof}
If $r_j > y_\ell$, then $J_j$ is tested in every algorithm $A_i$, for $i = 0, 1, \ldots, \ell$.
So, by Lemma~\ref{lemma04}, $E[p_j^A] \le (1 + \frac 1{y_\ell}) \rho_j = x_\ell \rho_j$.
If $r_j \le x_\ell$, then $J_j$ is untested in every algorithm $A_i$, $i = 0, 1, \ldots, \ell$.
So, by Lemma~\ref{lemma04}, $E[p_j^A] \le x_\ell \rho_j$.

If $y_h < r_j \le y_{h+1}$, for some $h = 0, 1, \ldots, \ell-1$,
then $J_j$ is tested in $A_0, A_1, \ldots, A_h$ but untested in $A_{h+1}, A_{h+2}, \ldots, A_\ell$.
Lemma~\ref{lemma04} and then Eq.~(\ref{eq06}) and Lemma~\ref{lemma08} together imply that
\begin{eqnarray*}
E[p_j^A] &\le & \max\left\{ \alpha + h \beta + (\ell - h) \beta  y_{h+1}, 1 + \frac {\alpha + h \beta}{y_h} \right\} \rho_j\\
	&=& \max\left\{ 1 + (\ell - h) \beta (y_{h+1} - 1), 1 + \frac {\alpha}{\varphi} \right\}\rho_j\\
	&\le & (1 + \frac {\alpha}{\varphi}) \rho_j = x_\ell \rho_j.
\end{eqnarray*}

Lastly, if $x_{h+1} < r_j \le x_h$, for some $h = 0, 1, \ldots, \ell-1$,
then $J_j$ is untested in $A_0, A_1, \ldots, A_h$ but tested in $A_{h+1}, A_{h+2}, \ldots, A_\ell$.
Lemma~\ref{lemma04} and then Eq.~(\ref{eq06}), Lemma~\ref{lemma08} and Eq.~(\ref{eq08}) together imply that
\begin{eqnarray*}
E[p_j^A] &\le & \max\left\{ (\ell - h) \beta + (1 - (\ell - h) \beta) x_h, 1 + \frac {(\ell - h) \beta}{x_{h+1}} \right\} \rho_j\\
	&=& \max \left\{ 1 + \frac {\alpha}{\varphi}, 1 + \frac {(\ell - h) \beta}{x_{h+1}} \right\} \rho_j\\
	&\le& \max \left\{ 1 + \frac {\alpha}{\varphi}, 1 + \frac {1 - \alpha}{x_\ell} \right\} \rho_j\\
	&\le & (1 + \frac {\alpha}{\varphi})\rho_j = x_\ell \rho_j.
\end{eqnarray*}
This proves the lemma.
\end{proof}

\begin{theorem}
\label{thm04}
The expected competitive ratio of {\sc GCL} for the problem $P \mid online, t_j, 0 \le p_j \le u_j \mid C_{\max}$
is at most $\sqrt{(1 - \frac 1m)^2 (1 + \frac 1\ell)^2 \varphi^2 + 2(1 - \frac 1m)(1 + \frac 1\ell)} + 1 - (1 - \frac 1m) \frac \varphi\ell$
by setting $\alpha(m, \ell), \beta(m, \ell)$ as in Eq.~(\ref{eq05}) and $y_i(m, \ell), x_i(m, \ell)$ for every $i = 0, 1, \ldots, \ell$ as in Eq.~(\ref{eq06}),
where $m \ge 2$ is the number of machines and $\ell \ge 1$ is a fixed constant.
\end{theorem}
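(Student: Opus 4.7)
The plan is to bound $E[C^A]$ by combining the classical list scheduling bound for each deterministic component $A_i$ with the per-job estimates from Lemmas~\ref{lemma09} and~\ref{lemma10}, and then to verify that the parameter $\alpha(m,\ell)$ in Eq.~(\ref{eq05}) is precisely the minimizer of the resulting expression.

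First, since each $A_i$ assigns arriving jobs to the least loaded machine, Graham's inequality~\cite{Gra66} gives, for every $i = 0, 1, \ldots, \ell$,
\begin{equation*}
C^{A_i} \;\le\; \frac{1}{m}\sum_{j=1}^n p_j^{A_i} + \left(1 - \frac{1}{m}\right)\max_{j=1}^n p_j^{A_i}.
\end{equation*}
Multiplying by $\alpha_i$, summing over $i$, and using Eqs.~(\ref{eq02}) and~(\ref{eq03}), I get
\begin{equation*}
E[C^A] \;\le\; \frac{1}{m}\sum_{j=1}^n E[p_j^A] + \left(1 - \frac{1}{m}\right) \sum_{i=0}^\ell \alpha_i \max_{j=1}^n p_j^{A_i}.
\end{equation*}
Lemma~\ref{lemma10} together with Lemma~\ref{lemma05} bounds the first term by $x_\ell C^*$, and Lemma~\ref{lemma09} combined with Lemma~\ref{lemma05} bounds $\max_j p_j^{A_i}$ by $y_i C^*$, so
\begin{equation*}
\frac{E[C^A]}{C^*} \;\le\; x_\ell + \left(1 - \frac{1}{m}\right) \sum_{i=0}^\ell \alpha_i y_i.
\end{equation*}

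Next, using $\alpha_0 = \alpha$, $\alpha_i = \beta$ for $i \ge 1$, $y_0 = \varphi$, $y_i = \varphi(\alpha + i\beta)/\alpha$ from Eqs.~(\ref{eq05}) and~(\ref{eq06}), and the identity $\ell \beta = 1 - \alpha$, a direct computation yields
\begin{equation*}
\sum_{i=0}^\ell \alpha_i y_i \;=\; \alpha \varphi + \frac{\beta \varphi}{\alpha}\!\left[\ell \alpha + \beta\,\frac{\ell(\ell+1)}{2}\right] \;=\; \varphi + \frac{(\ell+1)(1-\alpha)^2 \varphi}{2 \alpha \ell}.
\end{equation*}
Substituting $x_\ell = 1 + \alpha/\varphi$, the upper bound on $E[C^A]/C^*$ becomes
\begin{equation*}
1 + \left(1-\tfrac{1}{m}\right)\varphi + \frac{\alpha}{\varphi} + \left(1-\tfrac{1}{m}\right)\frac{(\ell+1)(1-\alpha)^2 \varphi}{2\alpha \ell}.
\end{equation*}

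Finally, I would optimize over $\alpha$. Set $A = 1/\varphi$ and $B = (1-1/m)(\ell+1)\varphi/(2\ell)$, so the $\alpha$-dependent piece equals $A\alpha + B(1-\alpha)^2/\alpha = (A+B)\alpha + B/\alpha - 2B$. By AM--GM, $(A+B)\alpha + B/\alpha \ge 2\sqrt{B(A+B)}$, with equality exactly when $\alpha^2 = B/(A+B)$; a one-line check shows this equals $(1-1/m)(\ell+1)\varphi^2/\bigl[(1-1/m)(\ell+1)\varphi^2 + 2\ell\bigr]$, matching Eq.~(\ref{eq05}). The minimum value is therefore $\sqrt{4B(A+B)} - 2B$, and expanding gives $4B(A+B) = (1-1/m)^2(1+1/\ell)^2 \varphi^2 + 2(1-1/m)(1+1/\ell)$ while $1 + (1-1/m)\varphi - 2B = 1 - (1-1/m)\varphi/\ell$, producing the claimed bound.

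The main obstacle is making sure the final optimization step is clean: one must confirm that $\alpha(m,\ell)$ as defined in Eq.~(\ref{eq05}) really is the AM--GM equality point of the quadratic-in-$1/\alpha$ expression, and that the residual linear terms $1 + (1-1/m)\varphi - 2B$ telescope neatly to $1 - (1-1/m)\varphi/\ell$. This is routine algebra but must be executed carefully so that the constants under the square root and outside it line up exactly as stated.
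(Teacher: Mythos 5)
Your proposal is correct and follows essentially the same route as the paper: bound each $C^{A_i}$ by the list-scheduling inequality, apply Lemmas~\ref{lemma09}, \ref{lemma10} and \ref{lemma05} to get $E[C^A]/C^* \le x_\ell + (1-\frac1m)\sum_i \alpha_i y_i$, and then carry out the same algebraic simplification. The only (harmless) differences are that you use $\max_j p_j^{A_i}$ in place of the makespan-determining job $p_{n_i}^{A_i}$, and you make explicit, via AM--GM, that $\alpha(m,\ell)$ in Eq.~(\ref{eq05}) is the minimizer --- a verification the paper leaves implicit when it asserts the final equality.
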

\begin{proof}
Suppose the job $J_{n_i}$ determines the makespan of the schedule produced by the component algorithm $A_i$,
for each $i = 0, 1, \ldots, \ell$, respectively.
Note that $J_{n_i}$ is assigned to the least loaded machine by the algorithm $A_i$.
Therefore, the makespan
\[
C^{A_i} \le \frac 1m \sum_{j=1}^{n_i - 1} p_j^{A_i} + p_{n_i}^{A_i} = \frac 1m \sum_{j=1}^{n_i} p_j^{A_i} + (1 - \frac 1m) p_{n_i}^{A_i}
	\le \frac 1m \sum_{j=1}^n p_j^{A_i} + (1 - \frac 1m) p_{n_i}^{A_i}.
\]
Using Eqs.~(\ref{eq03}),~(\ref{eq07}), Lemmas~\ref{lemma09}, \ref{lemma10} and the lower bounds in Lemma \ref{lemma05},
the expected makespan is at most
\begin{eqnarray*}
E[C^A] & = & \alpha C^{A_0} + \sum_{i=1}^\ell \beta C^{A_i} \\
	& \le & \frac 1m \sum_{j = 1}^n E[p_j^A] + \alpha (1 - \frac 1m) p^{A_0}_{n_0} + \sum_{i=1}^\ell \beta (1 - \frac 1m) p^{A_i}_{n_i} \\
	& \le & \frac {x_\ell}m \sum_{j = 1}^n \rho_j + \alpha (1 - \frac 1m) \varphi \rho_{n_0} + \sum_{i=1}^\ell \beta (1 - \frac 1m) y_i \rho_{n_i} \\
	& \le & \left( x_\ell + (1 - \frac 1m) \left(\alpha \varphi + \beta \sum_{i=1}^\ell y_i\right) \right) C^* \\
	& = & \left( 1 + \frac {\alpha}{\varphi} + (1 - \frac 1m) \left( \alpha \varphi + \ell \beta \varphi + \frac {\ell (\ell + 1) \beta^2 \varphi}{2 \alpha}\right) \right) C^* \\
	& = & \left( \sqrt{(1-\frac 1m)^2 (1 + \frac 1\ell)^2 \varphi^2 + 2 (1-\frac 1m) (1 + \frac 1\ell)} + 1 - (1 - \frac 1m) \frac {\varphi}{\ell} \right) C^*,
\end{eqnarray*}
and thus the theorem is proved.
\end{proof}

By choosing a sufficiently large $\ell$, the expected competitive ratio of the algorithm {\sc GCL} approaches $\sqrt{(1 - \frac 1m)^2 \varphi^2 + 2 (1 - \frac 1m)} + 1$,
which is increasing in $m$ and approaches $\sqrt{\varphi + 3} + 1 \approx 3.1490$ when $m$ tends to infinity.
Correspondingly, one can check that $\alpha$, $y_\ell$ and $x_\ell$ approach  
$\sqrt{\frac {\varphi + 1}{\varphi + 3}} \approx 0.7529$, $\sqrt{\varphi + 3} \approx 2.1490$ and $1 + \frac 1{\sqrt{\varphi + 3}} \approx 1.4653$, respectively.

In the algorithm {\sc GCL}, we can determine the testing probability $f(r)$ of a job as a function in its ratio $r$.
For example, $f(r) = 1$ if $r \ge y_\ell$ and $f(r) = 0$ if $r \le x_\ell$.
With a fixed constant $\ell$, one sees that such a probability function $f(r)$ is staircase and increasing.
The limit of this probability function when $\ell$ tends to infinity, still denoted as $f(r)$, is interesting.
First, note that $\alpha(m, \ell)$ approaches $\alpha(m) = \sqrt{ \frac {(1-\frac 1m) \varphi^2}{(1-\frac 1m) \varphi^2 + 2} }$,
$y_\ell(m, \ell)$ and $x_\ell(m, \ell)$ approach $y(m) = \frac \varphi{\alpha(m)}$ and $x(m) = 1 + \frac 1{y(m)}$, respectively.
Then, by Eq.(\ref{eq06}), we have
\begin{equation*}
f(r) = \left\{
\begin{array}{ll}
0, 														& \mbox{ when } r \le x(m),\\
1 - \frac{\alpha(m)}{\varphi (r-1)},	& \mbox{ when } x(m) < r \le \varphi, \\
\frac{\alpha(m) r}\varphi,					& \mbox{ when } \varphi < r \le y(m), \\
1, 														& \mbox{ when } r > y(m),
\end{array}\right.
\end{equation*}
which is increasing, and continuous everywhere except at $\varphi$.

We remark that our {\sc GCL} algorithm for $P \mid online, t_j, 0 \le p_j \le u_j \mid C_{\max}$
has its expected competitive ratio beating the best known deterministic competitive ratio of $\varphi (2 - \frac 1m)$.

When there are only two machines, i.e., $m = 2$,
by choosing a sufficiently large $\ell$, the algorithm {\sc GCL} is already expected $\frac 12 \sqrt{\varphi + 5} + 1 \approx 2.2863$-competitive
for the problem $P2 \mid online, t_j, 0 \le p_j \le u_j \mid C_{\max}$.
Note that inside the algorithm, the probability of choosing the algorithm $A_0$ approaches $\alpha(2) = \frac \varphi{\sqrt{\varphi + 5}} \approx 0.6290$.
We next show that, by choosing another probability for running $A_0$,
the revised {\sc GCL} algorithm based on two component algorithms is expected $\frac {3\varphi + 3\sqrt{13 - 7\varphi}}4 \approx 2.1839$-competitive.

\begin{theorem}
\label{thm05}
The expected competitive ratio of the revised {\sc GCL} algorithm for the problem $P2 \mid online, t_j, 0 \le p_j \le u_j \mid C_{\max}$
is at most $\frac {3\varphi + 3\sqrt{13 - 7\varphi}}4$ by using two component algorithms and by setting
\begin{equation}
\label{eq09}
\alpha = \varphi - 1 \approx 0.6180,
x_1 = \frac {\varphi + \sqrt{13 - 7\varphi}}2 \approx 1.4559, \mbox{ and } 
y_1 = \frac 1{x_1-1} \approx 2.1935.
\end{equation}
Furthermore, the expected competitive ratio is tight.
\end{theorem}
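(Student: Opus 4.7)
The plan is to mirror the proof of Theorem~\ref{thm04} but with $m = 2$, $\ell = 1$, and the specialized probability $\alpha = \varphi - 1$ in place of $\alpha(2, 1)$ from Eq.~\eqref{eq05}. First, I would verify that the tailored parameters still satisfy the invariants used earlier: (i) $y_1 > \varphi$, so Lemma~\ref{lemma09} yields $p_j^{A_0} \le \varphi \rho_j$ and $p_j^{A_1} \le y_1 \rho_j$; (ii) the key algebraic identities $\alpha\varphi = \varphi^2 - \varphi = 1$ and $\alpha + (1-\alpha) y_1 = x_1$ hold, from which an analogue of Lemma~\ref{lemma10} gives $E[p_j^A] \le x_1 \rho_j$ for every job. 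Solving the second identity as a quadratic in $x_1$ using $y_1 = 1/(x_1 - 1)$ is exactly what pins down $x_1 = (\varphi + \sqrt{13 - 7\varphi})/2$.

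Next, I would apply the two-machine list-scheduling inequality $C^{A_i} \le \tfrac{1}{2}\sum_j p_j^{A_i} + \tfrac{1}{2}\max_j p_j^{A_i}$ to each component, take the expectation, and reach
\[
E[C^A] \le \tfrac{1}{2}\sum_j E[p_j^A] + \tfrac{1}{2}\bigl(\alpha \max_j p_j^{A_0} + (1-\alpha) \max_j p_j^{A_1}\bigr).
\]
The first term is bounded by $\tfrac{x_1}{2}\sum_j \rho_j \le x_1 C^*$ via Lemma~\ref{lemma05}. The target $\tfrac{3 x_1}{2} C^* = \tfrac{3\varphi + 3\sqrt{13 - 7\varphi}}{4}\,C^*$ follows once I establish $\alpha \max_j p_j^{A_0} + (1-\alpha) \max_j p_j^{A_1} \le x_1 \max_j \rho_j \le x_1 C^*$.

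The main obstacle is this max-term bound. When the same job $j^*$ attains both maxima, it reduces directly to $E[p_{j^*}^A] \le x_1 \rho_{j^*}$. The delicate case is when the two maxima lie at distinct jobs $j_0^* \neq j_1^*$: naive substitution gives $\rho_{j_0^*} + (2-\varphi) y_1 \rho_{j_1^*} \approx 1.838\,C^*$, overshooting $x_1 C^* \approx 1.456\,C^*$. The refinement I would use exploits the fact that no single job can simultaneously force $p_j^{A_0} = \varphi \rho_j$ and $p_j^{A_1} = y_1 \rho_j$ to be tight, because these tightness conditions demand disjoint ratios ($r_j = \varphi$ for $A_0$, $r_j \in \{x_1, y_1\}$ for $A_1$). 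Consequently, for the distinct pair, the cross-terms satisfy $p_{j_0^*}^{A_1} = \rho_{j_0^*}$ and $p_{j_1^*}^{A_0} = \rho_{j_1^*}$, which injects enough slack into $\sum_j E[p_j^A]$ to absorb the excess in the max term; a balancing argument then recovers the $\tfrac{3 x_1}{2}$ ratio. Carrying out this accounting — and handling the borderline instances with only two jobs, where the LS inequality itself is loose — is the crux of the proof.

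For the matching lower bound, I would exhibit three jobs $J_1, J_2, J_3$ all with $r_j = y_1$ and adversarial $p_j = 0$, with $t_1 = t_2 = 1$ and $t_3 = 2$ (so $\rho_j = t_j$). The offline optimum places $J_3$ on one machine and $J_1, J_2$ on the other, giving $C^* = 2$. In $A_0$ each job is tested (since $r_j = y_1 > \varphi$) with $p_j^{A_0} = t_j$, so list scheduling yields makespan $3$; in $A_1$ each is untested (since $\varphi < r_j \le y_1$) with $p_j^{A_1} = y_1 t_j$, giving makespan $3 y_1$. Hence $E[C^A] = 3\alpha + 3 y_1(1 - \alpha) = 3 x_1$ by the identity $\alpha + (1-\alpha) y_1 = x_1$, and $E[C^A]/C^* = \tfrac{3 x_1}{2} = \tfrac{3\varphi + 3\sqrt{13 - 7\varphi}}{4}$, matching the upper bound.
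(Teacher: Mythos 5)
Your parameter identities, per-job bounds ($p_j^{A_0}\le\varphi\rho_j$, $p_j^{A_1}\le y_1\rho_j$, $E[p_j^A]\le x_1\rho_j$), and tight example are all correct (your lower-bound instance with $r_j=y_1$ is a valid alternative to the paper's instance with $r_j=x_1$). The gap is in the upper bound: the decomposition $E[C^A]\le\frac12\sum_j E[p_j^A]+\frac12\bigl(\alpha\max_j p_j^{A_0}+(1-\alpha)\max_j p_j^{A_1}\bigr)$ is too weak to yield $\frac{3x_1}{2}C^*$, and no balancing argument can rescue it. Concretely, take two jobs: $J_1$ with $(u_1,t_1,p_1)=(\varphi,1,0)$ and $J_2$ with $(u_2,t_2,p_2)=(y_1,1,0)$, so $\rho_1=\rho_2=C^*=1$. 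Then $J_1$ is untested in $A_0$ and tested in $A_1$ (giving $p_1^{A_0}=\varphi$, $p_1^{A_1}=1$), while $J_2$ is tested in $A_0$ and untested in $A_1$ (giving $p_2^{A_0}=1$, $p_2^{A_1}=y_1$). The maxima sit at distinct jobs and $\alpha\varphi+(1-\alpha)y_1=x_1+2-\varphi\approx1.838>x_1$, so your intermediate claim $\alpha\max_j p_j^{A_0}+(1-\alpha)\max_j p_j^{A_1}\le x_1 C^*$ is false. Worse, your full right-hand side evaluates \emph{exactly} to $\frac12\bigl((3-\varphi)+x_1\bigr)+\frac12(x_1+2-\varphi)\approx2.338$, exceeding the target $\frac{3x_1}{2}\approx2.184$; the cross-term slack you invoke is only $x_1+\varphi-3\approx0.074$ in the sum, half of which cannot cover the deficit of about $0.154$. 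So the decomposition itself must change, not just the bookkeeping.

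The mechanism the paper uses, and which your proposal is missing, is to truncate each component's list-scheduling sum at its own makespan-determining job rather than summing over all jobs. Writing $n_i$ for the index of the job that determines $C^{A_i}$, when $n_0\ne n_1$ the component whose critical job comes earlier has its sum extended to the later index at the cost of a \emph{negative} correction $-\frac12 p_{n_{\max}}^{A_i}\le-\frac12\rho_{n_{\max}}$. That subtraction is exactly what reduces the coefficient in front of $\rho_{n_0}$ (or $\rho_{n_1}$) so that the two boundary coefficients sum to $\frac{x_1}{2}$ and the total closes at $\frac{3x_1}{2}C^*$ (in one of the two asymmetric cases an extra $\frac{3-2\varphi}{2}<0$ even appears). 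Your instinct that the distinct-maxima case is the crux, and that a single job cannot be simultaneously tight for both components, is the right one --- but the slack has to be harvested from the truncated sums via this correction term, not from the cross-terms inside $\sum_j E[p_j^A]$. On the two-job example above this refinement gives roughly $2.03\,C^*$, comfortably below the target.
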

\begin{proof}
The three parameters in Eq.~(\ref{eq09}) satisfy the algorithm design: $1 < x_1 < \varphi < y_1$ and $\alpha < 1$.

Furthermore, they imply that $1 - \alpha + \alpha \varphi = 3 - \varphi < x_1$ and
$1 + \frac {\alpha}{\varphi} = 2 - \frac 1{\varphi} < x_1$.
Also noticing that $x_1$ is a root to the quadratic equation $x^2 - \varphi x + 2\varphi - 3 = 0$,
we have $x_1^2 = \varphi x_1 + 3 - 2\varphi$.
It follows that
\[
y_1 - 1 - \frac 1{x_1} = \frac {1 + x_1 - x_1^2}{x_1 (x_1-1)} = \frac {(\varphi - 1)(2 - x_1)}{x_1 (x_1-1)} > 0,
\]
\[
x_1 - 1 - \frac {1 - \alpha}{x_1} = \frac {x_1^2 - x_1 - (2 - \varphi)}{x_1} = \frac {(\varphi - 1)(x_1 - 1)}{x_1} > 0,
\]
and 
\[
x_1 - \alpha - (1 - \alpha) y_1 = \frac {x_1^2 - \varphi x_1 + 2\varphi - 3}{x_1 - 1} = 0.
\]

For each job $J_j$, if $r_j \le x_1$, then it is not tested in any of $A_0$ and $A_1$ and thus $p_j^{A_i} = u_j$, for $i = 0, 1$.
When $r_j \le 1$, $\rho_j = u_j$ and thus $p_j^{A_i} = \rho_j$;
when $1 < r_j \le x_1$, by Lemma~\ref{lemma03} we have $p_j^{A_i} \le x_1 \rho_j$.
That is, if $r_j \le x_1$, then we always have
\begin{equation}
\label{eq10}
p_j^{A_i} \le x_1 \rho_j, \mbox{ for } i = 0, 1.
\end{equation}

If $r_j > y_1$, then $J_j$ is tested in both $A_0$ and $A_1$, and thus by Lemma~\ref{lemma02} and Eq.~(\ref{eq09}) we have
\begin{equation}
\label{eq11}
p_j^{A_i} \le (1 + \frac 1{y_1})\rho_j = x_1 \rho_j, \mbox{ for } i = 0, 1.
\end{equation}

If $x_1 < r_j \le \varphi$, then $J_j$ is untested in $A_0$ but tested in $A_1$.
By Lemmas~\ref{lemma02}--\ref{lemma04}, we have
\begin{equation}
\label{eq12}
p_j^{A_0} \le \varphi \rho_j \mbox{, }
p_j^{A_1} \le (1+\frac 1{x_1}) \rho_j \mbox{, and }
E[p_j^A] \le \max \left\{ 1-\alpha + \alpha \varphi, 1 + \frac {1-\alpha}{x_1} \right\} \rho_j.
\end{equation}

Lastly, if $\varphi < r_j \le y_1$, then $J_j$ is tested in $A_0$ but untested in $A_1$.
By Lemmas~\ref{lemma02}--\ref{lemma04}, we have
\begin{equation}
\label{eq13}
p_j^{A_0} \le \varphi \rho_j \mbox{, }
p_j^{A_1} \le y_1 \rho_j \mbox{, and }
E[p_j^A] \le \max \left\{ \alpha + (1-\alpha) y_1, 1 + \frac {\alpha}\varphi \right\} \rho_j.
\end{equation}

The processing time of $J_j$ in the two algorithms in all the four cases, in Eqs.~(\ref{eq10}--\ref{eq13}), can be merged as
\begin{equation}
\label{eq14}
p_j^{A_0} \le \varphi \rho_j \mbox{ and }
p_j^{A_1} \le \max\left\{ 1+\frac 1{x_1}, y_1 \right\} \rho_j = y_1 \rho_j;
\end{equation}
and the expected processing time of $J_j$ in the revised {\sc GCL} algorithm in all the four cases can be merged as
\begin{equation}
\label{eq15}
E[p_j^A] \le \max\left\{ x_1, 1 + \frac {1-\alpha}{x_1}, \alpha + (1-\alpha) y_1 \right\} \rho_j = x_1 \rho_j.
\end{equation}

Suppose the job $J_{n_i}$ determines the makespan of the schedule produced by the algorithm $A_i$, for $i = 0, 1$, respectively.
Note that $J_{n_i}$ is assigned to the least loaded machine by the algorithm.
Therefore,
\begin{equation}
\label{eq16}
C^{A_i} \le \frac 12 \sum_{j=1}^{n_i - 1} p_j^{A_i} + p_{n_i}^{A_i} = \frac 12 \sum_{j=1}^{n_i} p_j^{A_i} + \frac 12 p_{n_i}^{A_i}.
\end{equation}
We distinguish the following three cases.

\noindent{\bf Case 1:} $n_1 = n_2$.
Using Eqs.~(\ref{eq14}, \ref{eq15}, \ref{eq16}) and the lower bounds in Lemma~\ref{lemma05}, we have
\begin{eqnarray*}
E[C^A] & = & \alpha C^{A_0} + (1-\alpha) C^{A_1} \\
	& \le & \frac 12 \sum_{j = 1}^{n_1} (\alpha p_j^{A_0} + (1-\alpha)p_j^{A_1}) + \frac 12 (\alpha p_{n_1}^{A_0} + (1-\alpha)p_{n_1}^{A_1}) \\
	& = &  \frac 12 \sum_{j = 1}^{n_1} E[p_j^A] + \frac 12 E[p_{n_1}^A] \\
	& \le & \frac {x_1}2 \sum_{j = 1}^{n_1} \rho_j + \frac {x_1}2 \rho_{n_1} \\
	& \le & \frac {3x_1}2 C^*.
\end{eqnarray*}

\noindent{\bf Case 2:} $n_1 > n_2$.
Eq.~(\ref{eq16}) becomes
\[
C^{A_0} \le \frac 12 \sum_{j=1}^{n_1} p_j^{A_0} + \frac 12 p_{n_1}^{A_0}, \mbox{ and }
C^{A_1} \le \frac 12 \sum_{j=1}^{n_1} p_j^{A_1} + \frac 12 p_{n_2}^{A_1} - \frac 12 p_{n_1}^{A_1}.
\]
Then using $p_{n_1}^{A_1} \ge \rho_{n_1}$, Eqs.~(\ref{eq09}, \ref{eq14}, \ref{eq15}) and the lower bounds in Lemma~\ref{lemma05}, we have
\begin{eqnarray*}
E[C^A] & = & \alpha C^{A_0} + (1-\alpha) C^{A_1} \\
	& \le & \frac 12 \sum_{j = 1}^{n_1} E[p_j^A] + \frac {\alpha}2  p_{n_1}^{A_0} + \frac {1-\alpha}2 p_{n_2}^{A_1} - \frac {1-\alpha}2 p_{n_1}^{A_1} \\
	& \le & \frac {x_1}2 \sum_{j = 1}^{n_1} \rho_j + \frac {\alpha \varphi}2 \rho_{n_1} + \frac {1-\alpha}2 y_1 \rho_{n_2} - \frac {1-\alpha}2 \rho_{n_1} \\
	& = & \frac {x_1}2 \sum_{j = 1}^{n_1} \rho_j + \frac {\varphi-1}2 \rho_{n_1} + \frac {x_1-\varphi+1}2 \rho_{n_2} \\
	& \le & \frac {3x_1}2 C^*.
\end{eqnarray*}

\noindent{\bf Case 3:} $n_1 < n_2$. 
Eq.~(\ref{eq16}) becomes
\[
C^{A_0} \le \frac 12 \sum_{j=1}^{n_2} p_j^{A_0} + \frac 12 p_{n_1}^{A_0} - \frac 12 p_{n_2}^{A_0}, \mbox{ and }
C^{A_1} \le \frac 12 \sum_{j=1}^{n_2} p_j^{A_1} + \frac 12 p_{n_2}^{A_1}.
\]
Then using $p_{n_2}^{A_0} \ge \rho_{n_2}$, Eqs.~(\ref{eq09}, \ref{eq14}, \ref{eq15}) and the lower bounds in Lemma~\ref{lemma05}, we have
\begin{eqnarray*}
E[C^A] & = & \alpha C^{A_0} + (1-\alpha) C^{A_1} \\
	& \le & \frac 12 \sum_{j = 1}^{n_2} E[p_j^A] + \frac {\alpha}2  p_{n_1}^{A_0} + \frac {1-\alpha}2 p_{n_2}^{A_1} - \frac {\alpha}2 p_{n_2}^{A_0} \\
	& \le & \frac {x_1}2 \sum_{j = 1}^{n_2} \rho_j + \frac {\alpha \varphi}2 \rho_{n_1} + \frac {1-\alpha}2 y_1 \rho_{n_2} - \frac {\alpha}2 \rho_{n_2} \\
	& = & \frac {x_1}2 \sum_{j = 1}^{n_2} \rho_j + \frac 12 \rho_{n_1} + \frac {x_1-2\varphi+2}2 \rho_{n_2} \\
	& \le & \left( \frac {3x_1}2 + \frac {3-2\varphi}2 \right)C^* \\
	& \le & \frac {3x_1}2 C^*.
\end{eqnarray*}

In all the above three cases, we have $E[C^A] \le \frac {3x_1}2 C^* = \frac {3\varphi+3\sqrt{13-7\varphi}}4 C^*$.
This proves that the expected competitive ratio of the revised {\sc GCL} algorithm is at most $\frac {3\varphi+3\sqrt{13-7\varphi}}4 \approx 2.1839$.

We next give a three-job instance of the two-machine problem $P2 \mid online, t_j, 0 \le p_j \le u_j \mid C_{\max}$ in Instance~\ref{ex04}
to show the expected competitive ratio of the revised {\sc GCL} algorithm is tight.

\begin{instance} 
\label{ex04}
In this three-job instance of $P2 \mid online, t_j, 0 \le p_j \le u_j \mid C_{\max}$,
the job order is $\langle J_1, J_2, J_3\rangle$, with
\[
\left\{
\begin{array}{lll}
u_1 = x_1, 		&t_1 = 1, 	&p_1 = 0,\\ 
u_2 = x_1, 		&t_2 = 1, 	&p_2 = 0,\\ 
u_3 = 2 x_1, 	&t_3 = 2, 	&p_3 = 0.
\end{array}
\right.
\]
\end{instance}

One sees that for this instance, $\rho_1 = \rho_2 = 1$ and $\rho_3 = 2$.
Therefore, the optimal offline makespan is $C^* = 2$ by testing all the jobs and assigning $J_1, J_2$ on one machine and $J_3$ on the other machine.
On the other hand, $r_j = x_1 < \varphi$ for any $j = 1, 2, 3$ and so they are all untested in either algorithm of $A_0$ and $A_1$.
It follows that $C^{A_0} = C^{A_1} = 3 x_1$ and thus $E[C^A] = 3x_1$ too.
The expected competitive ratio of {\sc GCL} on this instance is $\frac {3 x_1}2$.
This finishes the proof of the theorem.
\end{proof}

\section{Lower bounds on competitive ratios} 
In this section, we show lower bounds of $1.6682$ and $1.6522$ on the expected competitive ratio of any randomized algorithm
for the fully online problem $P \mid t_j, 0 \le p_j \le u_j \mid C_{\max}$ at the presence of at least three machines and only two machines, respectively.
One thus sees that the expected competitive ratios of the {\sc GCL} algorithm are far away from
these two lower bounds and $2 - \frac 1m$~\cite{AE21} on the expected competitive ratio at the presence of two, three and $m \ge 4$ machines, respectively.
These gaps certainly welcome future research.
We also prove an improved lower bound of $2.2117$ on the competitive ratio for any deterministic algorithm for $P2 \mid t_j, 0 \le p_j \le u_j \mid C_{\max}$,
implying that the {\sc GCL} algorithm is better than any deterministic algorithm in terms of expected competitive ratio.

\subsection{On expected competitive ratios}
Yao's principle~\cite{Yao77} is an efficient tool in proving the lower bounds for randomized algorithms.
Below we prove a theorem which is a slight modification of Yao's principle.
Let $\mcI$ and $\mcA$ be the deterministic instance space and the deterministic algorithm space, respectively, for an optimization problem.
For distinction, let $I$ and $I_r$ denote a deterministic instance and a randomized instance, respectively,
and let $A$ and $A_r$ denote a deterministic algorithm and a randomized algorithm, respectively.
Note that in a randomized instance $I_r$ of the multiprocessor scheduling with testing problem,
each job $J_j$ is given known $u_j$ and $t_j$ values and, after the testing is done, a probability distribution of $p_j$ value.

\begin{theorem}
\label{thm01}
For any randomized algorithm $A_r$ and any randomized instance $I_r$, the expected competitive ratio of $A_r$ satisfies
\[
\sup_{I \in \mcI} \frac{E[C^{A_r}(I)]}{C^*(I)} \ge \inf_{A \in \mcA} \frac{E[C^A(I_r)]}{E[C^*(I_r)]},
\]
where $E[C^A(I_r)]$ and $E[C^*(I_r)]$ are the expected objective value for a deterministic algorithm $A$ and
the expected optimal objective value on the instance $I_r$, respectively.
\end{theorem}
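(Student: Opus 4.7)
The plan is to view the randomized algorithm $A_r$ as a probability distribution $\mu$ over the deterministic algorithm space $\mcA$, so that for every deterministic instance $I \in \mcI$ one has $E[C^{A_r}(I)] = \sum_{A \in \mcA} \mu(A)\, C^A(I)$. Writing $\rho^* = \sup_{I \in \mcI} E[C^{A_r}(I)]/C^*(I)$ for the expected competitive ratio, we may assume $\rho^* < \infty$ (else the claim is vacuous). By definition of the supremum, the inequality $E[C^{A_r}(I)] \le \rho^* \cdot C^*(I)$ holds for every $I \in \mcI$; this is the inequality I will then average over the instance distribution.

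Next I would view the randomized instance $I_r$ as a probability distribution over deterministic instances $I \in \mcI$, chosen independently of the algorithm randomness, and take expectation of both sides of the displayed inequality with respect to $I \sim I_r$. This yields
\[
E_{I \sim I_r}\!\bigl[E[C^{A_r}(I)]\bigr] \;\le\; \rho^* \cdot E_{I \sim I_r}[C^*(I)] \;=\; \rho^* \cdot E[C^*(I_r)].
\]
Because the algorithm-side randomness (governed by $\mu$) and the instance-side randomness (governed by $I_r$) are independent, Fubini's theorem permits swapping the two expectations on the left, giving
\[
\sum_{A \in \mcA} \mu(A) \cdot E[C^A(I_r)] \;\le\; \rho^* \cdot E[C^*(I_r)].
\]

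To finish, I would observe that the left-hand side is a convex combination of the quantities $E[C^A(I_r)]$ with weights $\mu(A)$, so at least one deterministic algorithm $A$ in the support of $\mu$ satisfies $E[C^A(I_r)] \le \rho^* \cdot E[C^*(I_r)]$. Dividing through by $E[C^*(I_r)] > 0$ yields $\inf_{A \in \mcA} E[C^A(I_r)]/E[C^*(I_r)] \le \rho^*$, exactly the inequality claimed. There is no real obstacle: the argument is a direct minimax exchange via Fubini, and the only substantive departure from classical Yao's principle is that the ratio-of-expectations $E[C^A(I_r)]/E[C^*(I_r)]$ replaces the expected-ratio $E[C^A(I_r)/C^*(I_r)]$. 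This in general gives a weaker lower bound but, crucially, avoids having to control $C^*(I)$ pointwise on each realization of $I_r$, which is what makes the modified form usable for the concrete adversarial instance distributions built later in Section~4.
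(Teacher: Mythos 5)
Your proposal is correct and follows essentially the same route as the paper's proof: both express the two expectations as weighted sums over $\mcA$ and $\mcI$, swap the order of summation, and use that a weighted average of the quantities $E[C^A(I_r)]$ dominates their infimum. The only cosmetic difference is that you multiply through by the supremum ratio before averaging, whereas the paper averages first and then invokes the mediant inequality $\frac{\sum_I w_I a_I}{\sum_I w_I b_I} \le \sup_I \frac{a_I}{b_I}$ at the end; the mathematical content is identical.
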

\begin{proof}
For any deterministic algorithm $A$, we have
\[
E[C^A(I_r)] = \sum_{I \in \mcI} Pr[I_r = I] C^A(I),
\]
where $Pr[I_r = I]$ is the probability that the randomized instance $I_r$ happens to be the deterministic instance $I$.
Note that $\sum_{A \in \mcA} Pr[A_r = A] = 1$ and $E[C^{A_r}(I)] = \sum_{A \in \mcA} Pr[A_r = A] C^A(I)$.
It follows that
\begin{eqnarray*}
\inf_{A \in \mcA} E[C^A(I_r)] & = & \inf_{A \in \mcA} \sum_{I \in \mcI} Pr[I_r = I] C^A(I) \\
	& \le & \sum_{A \in \mcA}Pr[A_r = A] \sum_{I \in \mcI} Pr[I_r = I] C^A(I)\\
	& = & \sum_{I \in \mcI}Pr[I_r = I] \sum_{A \in \mcA} Pr[A_r = A] C^A(I)\\
	& = & \sum_{I \in \mcI}Pr[I_r = I] E[C^{A_r}(I)].
\end{eqnarray*}
Since $E[C^*(I_r)] = \sum_{I \in \mcI} Pr[I_r = I] C^*(I)$, at the end we obtain
\[
\inf_{A \in \mcA} \frac{E[C^A(I_r)]}{E[C^*(I_r)]} 
	\le \frac{\sum_{I \in \mcI}Pr[I_r = I] E[C^{A_r}(I)]}{\sum_{I \in \mcI} Pr[I_r = I] C^*(I)} 
	\le \sup_{I \in \mcI} \frac{E[C^{A_r}(I)]}{C^*(I)},
\]
which proves the theorem.
\end{proof}

For the semi-online single machine uniform testing problem $P1 \mid t_j = 1, 0 \le p_j \le u_j \mid C_{\max}$,
the following randomized instance shows the tight lower bound of $\frac 43$ on expected competitive ratios~\cite{DEM18,DEM20,AE20}.

\begin{instance} 
\label{ex01}
In this randomized instance of $P1 \mid t_j = 1, 0 \le p_j \le u_j \mid C_{\max}$,
there is only a single job $J_1$ with $u_1 = 2$ and $t_1 = 1$, and $p_1 = 0$ or $2$ each of probability $0.5$.

In any deterministic algorithm $A$, $J_1$ is either tested or untested.
If $J_1$ is untested, then $p_1^A = u_1 = 2$;
otherwise $J_1$ is tested and thus $p_1^A = t_1 + p_1$ which is $1$ or $3$ each of probability $0.5$.
It follows that the expected total execution time is $E[p_1^A] = 2$, disregarding $J_1$ is tested or not.

On the other hand, the optimal offline execution time is $\rho_1 = 1$ if $p_1 = 0$, or $2$ if $p_1 = 2$;
that is, the expected optimal execution time is $E[\rho_1] = 1.5$.
By Theorem~\ref{thm01}, $\frac {E[p_1^A]}{E[\rho_1]} = \frac 43$ is a lower bound on expected competitive ratios for
the semi-online single machine uniform testing problem $P1 \mid t_j = 1, 0 \le p_j \le u_j \mid C_{\max}$.
\end{instance}

When there are multiple machines,
the following deterministic instance shows a lower bound of $2 - \frac 1m$ on expected competitive ratios~\cite{AE21}.

\begin{instance} 
\label{ex02}
In this deterministic instance of $P \mid t_j = 1, 0 \le p_j \le u_j \mid C_{\max}$,
every job has $u_i = M$ and $t_i = 1$, where $M$ is sufficiently large,
there are $m(m-1)$ small jobs with $p_i = 0$,
and additionally a large job with its $p_i = m-1$.

The optimal offline schedule tests all the jobs, giving rise to $C^* = m$.

On the other hand, the sufficiently large $M$ forces any deterministic algorithm to also test all the jobs.
Noting that all these jobs are indistinguishable at the arrival.
The adversary decides the processing time of a particular job to be $m-1$,
which is the first job assigned by the algorithm to a machine that has a load of $m-1$.
This way, the makespan of the generated schedule is $C_{\max} \ge 2m-1$.

For the randomized instance consisting of only this deterministic instance,
Theorem~\ref{thm01} implies that $2 - \frac 1m$ is a lower bound on expected competitive ratios for
the semi-online multiprocessor uniform testing problem $P \mid t_j = 1, 0 \le p_j \le u_j \mid C_{\max}$.
\end{instance}

The above two lower bounds on expected competitive ratios for the most special semi-online uniform testing case
hold surely for the fully online general testing case too.
In the next two theorems, we prove lower bounds specifically for the fully online general testing problem $P \mid online, t_j, 0 \le p_j \le u_j \mid C_{\max}$,
using randomized instances constructed out of the following deterministic instance.
These two lower bounds are better than $2 - \frac 1m$ when $m = 2, 3$.

\begin{instance} 
\label{ex03}
In this deterministic instance $I$ of $P \mid online, t_j, 0 \le p_j \le u_j \mid C_{\max}$,
there are $m$ machines and $m+1$ jobs arriving in sequence $\langle J_1, J_2, \cdots, J_{m+1}\rangle$, with
\[
u_j = 2, t_j = 1, \mbox{ for any } j = 1, 2, \cdots, m+1.
\]
That is, these $m+1$ jobs are indistinguishable at the arrivals.
For a deterministic algorithm, the adversary sets the first two untested jobs, if exist,
or otherwise the last two jobs in the sequence to have their exact executing times $0$;
the other $m-1$ jobs have their exact executing times $2$.

For example, when $m = 4$ and the algorithm does not test any job, then the sequence of exact executing times is $\langle 0, 0, 2, 2, 2\rangle$;
if the algorithm does not test any of $J_3, J_4$ and $J_5$, then the sequence of exact executing times is $\langle 2, 2, 0, 0, 2\rangle$;
if the algorithm does not test $J_1$ only, then the sequence of exact executing times is $\langle 0, 2, 2, 2, 0\rangle$;
if the algorithm tests all the five jobs, then the sequence of exact executing times is $\langle 2, 2, 2, 0, 0\rangle$.
\end{instance}

Given a deterministic algorithm $A$, let $C^A_{\min}(I)$ denote the minimum machine load in the schedule produced by $A$ for the instance $I$.

\begin{lemma}
\label{lemma06}
When $m \ge 3$, on the instance $I$ in Instance~\ref{ex03}, for any deterministic algorithm $A$,
either $C^A(I) \ge 4$, or $C^A(I) = 3$ and $C_{\min}(I) \ge 2$.
\end{lemma}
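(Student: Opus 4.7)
The plan is to perform a case analysis on $k$, the number of jobs that the algorithm $A$ chooses not to test. First I would translate the adversary's rule in Instance~\ref{ex03} into the concrete machine-occupation time of each job: an untested job takes $u_j = 2$ units; a tested job with $p_j = 0$ takes $t_j + p_j = 1$ unit; and a tested job with $p_j = 2$ takes $3$ units. Unfolding the rule case by case, I would record that for $k \ge 2$ every job takes at least $2$ units; for $k = 1$ exactly one tested job takes $1$ unit, the single untested job takes $2$, and each of the other $m - 1$ tested jobs takes $3$, giving a total load of $3m$; and for $k = 0$ exactly two tested jobs take $1$ unit and the other $m - 1$ take $3$, giving a total load of $3m - 1$.

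Second, I would dispatch the three cases. For $k \ge 2$, pigeonhole forces some machine to receive at least two of the $m + 1$ jobs, and with minimum job size $2$ that machine has load at least $4$, so $C^A(I) \ge 4$. For $k = 1$, the total load $3m$ combined with $C^A(I) \le 3$ forces every one of the $m$ machine loads to equal exactly $3$, yielding $C^A(I) = 3$ and $C_{\min}(I) = 3 \ge 2$. For $k = 0$, assuming $C^A(I) \le 3$, no machine may contain a $3$-unit job together with any other job (since $3 + 1 = 4 > 3$), so each of the $m - 1$ three-unit jobs sits alone on a machine; the two $1$-unit jobs must then share the one leftover machine, producing load $2$ there, so $C^A(I) = 3$ and $C_{\min}(I) = 2$.

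The hard part is the last step of the $k = 0$ analysis, where one must rule out placing the two $1$-unit jobs on two different machines. If that were done, some other machine would be left empty, and the total load $3m - 1$ would be concentrated on at most $m - 1$ nonempty machines, giving an average of $(3m - 1)/(m - 1) > 3$ and forcing some machine to have load at least $4$, contradicting $C^A(I) \le 3$. This empty-machine check is the only nonroutine piece of the proof; everything else is pigeonhole or a one-line arithmetic comparison between the total load and $3m$.
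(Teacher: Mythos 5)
Your proof is correct and follows the same basic strategy as the paper's: a case analysis on the number $k$ of untested jobs combined with pigeonhole/counting on machine loads. The $k \ge 2$ case is identical to the paper's first case. Where you differ is in handling $k \le 1$: the paper treats $k=0$ and $k=1$ together by observing that among the first $m$ jobs all but one cost $3$, so any machine holding two of them already has load at least $4$, and otherwise each machine carries exactly one of them and the placement of $J_{m+1}$ yields the dichotomy. You instead split the two subcases and use, for $k=1$, the exact total load $3m$ (so a makespan of at most $3$ forces every machine load to equal $3$) and, for $k=0$, the exclusion argument that each of the $m-1$ size-$3$ jobs must sit alone, leaving the two unit jobs to share the single remaining machine. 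Both routes are valid, and your $k=1$ counting argument is arguably cleaner than the paper's placement argument. One minor remark: your closing ``empty machine'' paragraph is superfluous --- once the $m-1$ size-$3$ jobs exclusively occupy $m-1$ machines, only one machine remains for the two unit jobs, so there is nothing left to rule out --- and the claim that spreading the unit jobs over two machines leaves some machine empty does not quite follow as stated; but this redundancy does not affect the correctness of the proof.
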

\begin{proof}
If the algorithm $A$ does not test two or more jobs,
then one sees that the total processing time of every job is at least $2$, leading to the makespan $C^A(I) \ge 4$ by the Pigeonhole Principle.

If the algorithm $A$ tests all but at most one job,
then among the first $m$ jobs $\langle J_1, J_2, \cdots, J_m\rangle$, the total processing time of each but one of them is $3$,
and the total processing time of the exceptional job is either $1$ if it is tested or $2$ if it is untested.
One sees that when a machine is assigned with any two of these $m$ jobs, the makespan $C^A(I) \ge 4$ by the Pigeonhole Principle.
In the other case where each machine is assigned with exactly one of these $m$ jobs,
assigning the last job $J_{m+1}$ to any one of the machines would lead to either $C^A(I) \ge 4$, or $C^A(I) = 3$ and $C_{\min}(I) \ge 2$.
This finishes the proof.
\end{proof}

\begin{lemma}
\label{lemma07}
When $m = 2$, on the instance $I$ in Instance~\ref{ex03}, for any deterministic algorithm $A$,
either $C^A(I) \ge 5$,
or $C^A(I) \ge 4$ and $C_{\min}(I) \ge 1$,
or $C^A(I) \ge 3$ and $C_{\min}(I) \ge 2$.
\end{lemma}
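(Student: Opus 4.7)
The plan is to reduce the lemma to the single inequality $T := \sum_{j=1}^{3} p_j^A \ge 5$, where $T$ is the total processing time placed on the two machines by the algorithm $A$. Once this is in hand, since $m=2$ the two machine loads $L_1, L_2$ satisfy $L_1 + L_2 = T$, so $C^A(I) + C_{\min}(I) = T \ge 5$. Moreover each $p_j^A \in \{1,2,3\}$ is an integer, so $C^A(I) \ge \lceil T/2 \rceil \ge 3$. A short case analysis on $C^A(I)$ then yields the three alternatives: if $C^A(I) \ge 5$ the first alternative holds directly; if $C^A(I) = 4$, then $C_{\min}(I) = T - 4 \ge 1$ and the second alternative holds; and if $C^A(I) = 3$, then $C_{\min}(I) = T - 3 \ge 2$ and the third alternative holds.

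To establish $T \ge 5$, I will fix an arbitrary deterministic algorithm $A$ and carry out a case analysis on $k$, the number of jobs tested by $A$, using that a tested job $J_j$ contributes $t_j + p_j = 1 + p_j$ and an untested job contributes $u_j = 2$ to $T$, together with the adversary's rule specified in Instance~\ref{ex03}. If $k = 0$, all three jobs are untested and $T = 2 + 2 + 2 = 6$. If $k = 1$, the two untested jobs are the first two untested and receive $p = 0$, contributing $2$ each, while the tested job receives $p = 2$ and contributes $3$, so $T = 7$. If $k = 2$, only one untested job exists; consistently with the pattern $\langle 0, 2, 2, 2, 0\rangle$ shown in the $m = 4$ illustration of Instance~\ref{ex03}, this untested job and the last-arriving tested job in the sequence both receive $p = 0$, while the remaining tested job receives $p = 2$, so their contributions are $2$, $1$, $3$, giving $T = 6$. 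Finally if $k = 3$, the last two jobs in the sequence ($J_2$ and $J_3$) receive $p = 0$ and $J_1$ receives $p = 2$, so their contributions are $1$, $1$, $3$, giving $T = 5$. In every case $T \ge 5$, as required.

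The main obstacle is the correct reading of the adversary's rule in the intermediate case $k = 2$, where only one untested job exists so that neither ``the first two untested jobs'' nor ``the last two jobs in the sequence'' applies cleanly in isolation; one must combine them, as the worked examples in Instance~\ref{ex03} dictate, by letting the single untested job together with the last-arriving tested job play the role of the two jobs with $p = 0$. Once the $p$ values are set this way, the remaining arithmetic is immediate and the lemma follows without any further enumeration of job-to-machine assignments.
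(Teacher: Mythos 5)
Your proof is correct, and it reaches the conclusion by a somewhat different route than the paper. The paper splits on whether $J_1$ is tested and lower-bounds the three individual processing times by the multisets $\{3,1,1\}$ (if $J_1$ is tested) or $\{2,2,1\}$ (if not), then checks that every partition of such a triple onto two machines lands in one of the three scenarios; that argument does not need integrality of the loads. You instead aggregate: you show the total load $T=\sum_j p_j^A$ is at least $5$ in every case via a case analysis on the number of tested jobs, and then combine $C^A(I)+C_{\min}(I)=T$ with integrality of the loads to get $C^A(I)\ge\lceil T/2\rceil\ge 3$ and read off the three alternatives. The aggregate version is arguably tidier---one scalar inequality in place of a partition case analysis---but it does lean on the fact that every $p_j^A\in\{1,2,3\}$ is an integer (with $T=5$ and fractional loads such as $4.5$ and $0.5$, all three alternatives would fail), and you correctly invoke that. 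Your resolution of the adversary's rule in the one-untested-job case (the untested job, plus a last job from the sequence, receive executing time $0$) is the intended one: it matches the worked examples in Instance~\ref{ex03} and also the paper's own proof, which asserts that when $J_1$ is untested exactly one of $J_2,J_3$ receives executing time $2$; note that the strictly literal reading of the rule (``otherwise the last two jobs'') would assign $p=0$ to two tested jobs when only $J_1$ is untested, yielding $T=4$ and actually falsifying the lemma, so flagging and resolving this ambiguity as you did is essential.
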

\begin{proof}
Note that $I$ is now a three-job instance.

We distinguish the two cases on whether $A$ tests the first job $J_1$ or not.
If $J_1$ is tested, then $p_1 = 2$ and thus $p_1^A = 3$.
Note that the total processing times of $J_2$ and $J_3$ are at least $1$.
These three values $\{3, 1, 1\}$ guarantee one of the makespan scenarios.

If $J_1$ is untested, then $p_1^A = 2$.
Note that the total processing time of one of $J_2$ and $J_3$ is at least $2$ (which is either untested, or tested with the exact executing time $2$),
and of the other is at least $1$.
These three values $\{2, 2, 1\}$ guarantee one of the makespan scenarios.
\end{proof}

\begin{theorem}
\label{thm02}
The expected competitive ratio of any randomized algorithm for $P \mid online, t_j, 0 \le p_j \le u_j \mid C_{max}$
at the presence of three or more machines is at least $\frac {21}2 - \sqrt{78} \approx 1.6682$.
\end{theorem}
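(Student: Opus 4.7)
The plan is to invoke the Yao-type principle of Theorem~\ref{thm01}: it suffices to exhibit a single randomized instance $I_r$ of $P \mid online, t_j, 0 \le p_j \le u_j \mid C_{\max}$ on $m \ge 3$ machines such that every deterministic online algorithm $A$ satisfies $E[C^A(I_r)] \ge (\tfrac{21}{2} - \sqrt{78}) \, E[C^*(I_r)]$.

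I would build $I_r$ on the skeleton of Instance~\ref{ex03}: $m + 1$ jobs arrive in sequence, each with $u_j = 2$ and $t_j = 1$, so the arriving jobs are indistinguishable to the online algorithm at the moment of arrival. The randomness lives only in the hidden $p_j \in \{0, 2\}$ values. A natural construction is a two-component mixture: with probability $q$, two jobs chosen at uniformly random positions receive $p_j = 0$ (hence $\rho_j = 1$) and the remaining $m - 1$ jobs receive $p_j = 2$ (hence $\rho_j = 2$); with probability $1 - q$, every job receives $p_j = 2$. Direct computation of the offline optimum in each component gives $E[C^*(I_r)] = 2q + 4(1-q) = 4 - 2q$, with $q \in [0,1]$ a mixture parameter to be tuned at the end.

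On the algorithm side, fix an arbitrary deterministic $A$ and apply Lemma~\ref{lemma06} realization by realization: every realization yields either $C^A \ge 4$ or $C^A = 3$ with $C_{\min} \ge 2$. The crucial step is to show that the conditional probability of the favorable outcome $C^A = 3$ inside the two-small component is bounded above uniformly in $A$, using (i) the indistinguishability of the arriving jobs before any testing feedback, (ii) the uniform choice of the random small pair, and (iii) the structural restriction in the proof of Lemma~\ref{lemma06} that a $C^A = 3$ schedule must test essentially every job and place each large job on its own machine. In the all-large component this favorable outcome is outright impossible, so $C^A \ge 4$ surely. Combining these bounds expresses $E[C^A(I_r)]$ as an explicit linear function of $q$, and maximizing the ratio to $E[C^*(I_r)] = 4 - 2q$ produces the quadratic-root value $\tfrac{21}{2} - \sqrt{78}$, arising from balancing the slopes of the numerator and denominator.

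The main obstacle is the uniform bound on the favorable probability across all adaptive deterministic strategies, since $A$ may adapt subsequent testing and assignment decisions based on earlier test outcomes. I would handle this by induction on the arrival index, tracking the posterior distribution on the positions of the small pair given the history observed so far; the symmetry of the prior and the binary nature of the feedback should keep this posterior simple enough for a direct case analysis, after which the one-variable optimization in $q$ is routine.
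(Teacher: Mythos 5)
Your construction cannot certify the claimed bound, and the ``crucial step'' you identify is in fact false. Against your two-component mixture, consider the deterministic algorithm that tests every one of the $m+1$ jobs, assigns $J_j$ to machine $M_j$ for $j \le m$, and assigns $J_{m+1}$ untested to the least loaded machine. In the two-small component at least one of the two small jobs always lies among the first $m$ positions, so after the first $m$ jobs some machine has load $1$ and the last job brings it to $3$; the makespan is exactly $3$ with conditional probability $1$, not a probability bounded away from $1$. In the all-large component the makespan is $5$. Hence $E[C^A(I_r)] = 3q + 5(1-q)$ while $E[C^*(I_r)] = 4 - 2q$, and the ratio $\frac{5-2q}{4-2q}$ never exceeds $\frac 32$ for $q \in [0,1]$. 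So no choice of $q$ in your family of instances can yield $\frac{21}{2}-\sqrt{78} \approx 1.6682$: randomizing the positions of the small jobs does not hurt an algorithm that simply tests everything and spreads the jobs out, because testing reveals which jobs are small before they are placed.

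The ingredient you are missing is the one the paper's argument is built around. Lemma~\ref{lemma06} gives a dichotomy in which the \emph{only} way to reach makespan $3$ on the prefix is to finish with every machine loaded to at least $2$, i.e.\ $C^A_{\min}(I) \ge 2$, and the paper then appends an $(m+2)$-nd job whose parameters are themselves randomized over three outcomes, $(u,t,p) \in \{(0,0,0),\ (2+\frac1\alpha,3,0),\ (2+\frac1\alpha,3,2+\frac1\alpha)\}$, with probabilities tuned to $\alpha = \frac{2\sqrt{78}-5}{41}$. An algorithm that wins on the prefix is then punished precisely because its minimum load is high, while an algorithm that hedges already pays $4$ on the prefix; equalizing the four resulting cases is what produces the value $\frac{21}{2}-\sqrt{78}$. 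Note also that the prefix in Instance~\ref{ex03} uses a deterministic adversarial rule (the first two untested jobs, else the last two, are the small ones), which is what makes Lemma~\ref{lemma06} hold against every deterministic algorithm; replacing it by uniformly random positions forfeits the lemma in addition to omitting the appended job. Without the extra job and the minimum-load leverage, the $(m+1)$-job gadget with jobs of size at most $2$ cannot get past $\frac 32$.
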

\begin{proof}
We consider a randomized instance $I_r$, which is a probability distribution over three deterministic instances $I_1, I_2$ and $I_3$.
These three instances are all extended from $I$ in Instance~\ref{ex03}, by appending a job $J_{m+2}$ at the end of the job sequence.
Specifically, for $J_{m+2}$,
\[
(u_{m+2}, t_{m+2}, p_{m+2}) = \left\{
\begin{array}{ll}
(0, 0, 0),									&\mbox{ in } I_1,\\
(2 + \frac 1\alpha, 3, 0),					&\mbox{ in } I_2,\\
(2 + \frac 1\alpha, 3, 2 + \frac 1\alpha),		&\mbox{ in } I_3,
\end{array}\right.
\]
where $\alpha = \frac {2 \sqrt{78} - 5}{41} \approx 0.3089$,
and its probability distribution is $(\beta_1, \beta_2, \beta_3) = (\frac {2 - \alpha}3, \alpha, \frac {1 - 2\alpha}3)$.

Note that exactly two of $\rho_1, \rho_2, \cdots, \rho_{m+1}$ are $1$, while the others are $2$.
For the job $J_{m+2}$, $\rho_{m+2} = 0, 3, 2 + \frac 1{\alpha}$ in $I_1, I_2, I_3$, respectively.
It follows from $m \ge 3$ that the optimal offline makespans of $I_1, I_2, I_3$ are $C^* = 2, 3, 2 + \frac 1{\alpha}$, respectively.
Therefore, the expected optimal offline makespan of $I_r$ is
\[
E[C^*(I_r)] = 2 \beta_1 + 3 \beta_2 + (2 + \frac 1\alpha) \beta_3 = \frac 43 + \alpha + \frac 1{3\alpha}.
\]

For any deterministic algorithm $A$, it performs the same on the prefix instance $I$ of all the three instances $I_1, I_2$ and $I_3$,
including which jobs of the first $m+1$ ones are tested, to which machines they are assigned to,
and all the $m$ machine loads when the last job $J_{m+2}$ arrives.
In particular, since $u_{m+2}$ and $t_{m+2}$ are the same in $I_2$ and $I_3$, $J_{m+2}$ is either both tested or both untested by $A$.
By Lemma~\ref{lemma06}, we have either $C^A(I_1) = C^A(I) \ge 4$, or $C^A(I_1) = C^A(I) = 3$ and $C^A_{\min}(I_1) = C^A_{\min}(I) \ge 2$.
We distinguish the following four cases on $C^A(I)$ and whether $J_{m+2}$ of $I_2$ is tested by $A$ or not.

The first case is that $C^A(I) \ge 4$ and $J_{m+2}$ of $I_2$ is tested by $A$.
Clearly, $C^A(I_2) \ge 4$ and $C^A(I_3) \ge p_{m+2}^A = 5 + \frac 1{\alpha}$.
Therefore,
\[
E[C^A(I_r)] \ge 4 \beta_1 + 4 \beta_2 + (5 + \frac 1{\alpha}) \beta_3 = \frac {11}3 - \frac {2 \alpha}3 + \frac 1{3 \alpha}.
\]

The second case is that $C^A(I) \ge 4$ and $J_{m+2}$ of $I_2$ is untested by $A$.
We have $C^A(I_i) \ge p_{m+2}^A = 2 + \frac 1{\alpha}$ for both $i = 2, 3$.
Therefore,
\[
E[C^A(I_r)] \ge 4 \beta_1 + (2 + \frac 1{\alpha})(\beta_2 + \beta_3) = \frac {11}3 - \frac {2 \alpha}3 + \frac 1{3 \alpha}.
\]

The third case is that $C^A(I) = 3$, $C^A_{\min}(I) \ge 2$, and $J_{m+2}$ of $I_2$ is tested by $A$.
For $I_2$, $p_{m+2}^A = 3$ and thus $C^A(I_2) \ge 5$ no matter which machine $J_{m+2}$ is assigned to;
for $I_3$, $p_{m+2}^A = 5 + \frac 1{\alpha}$ and thus $C^A(I_3) \ge 7 + \frac 1{\alpha}$.
Therefore,
\[
E[C^A(I_r)] \ge 3 \beta_1 + 5 \beta_2 + (7 + \frac 1{\alpha}) \beta_3 = \frac {11}3 - \frac {2 \alpha}3 + \frac 1{3 \alpha}.
\]

The last case is that $C^A(I) = 3$, $C_{\min}(I) \ge 2$, and $J_{m+2}$ of $I_2$ is untested by $A$.
For both $I_2$ and $I_3$, $p_{m+2}^A = 2 + \frac 1{\alpha}$ and thus $C^A(I_i) \ge 4 + \frac 1{\alpha}$ no matter which machine $J_{m+2}$ is assigned to,
for both $i = 2, 3$.
Therefore,
\[
E[C^A(I_r)] \ge 3 \beta_1 + (4 + \frac 1{\alpha})(\beta_2 + \beta_3) = \frac {11}3 + \frac {\alpha}3 + \frac 1{3 \alpha}
	> \frac {11}3 - \frac {2 \alpha}3 + \frac 1{3 \alpha}.
\]

To conclude, by Theorem~\ref{thm01},
the expected competitive ratio of any randomized algorithm for $P \mid online, t_j, 0 \le p_j \le u_j \mid C_{\max}$
at the presence of three or more machines is at least
\[
\inf_A \frac {E[C^A(I_r)]}{E[C^*(I_r)]} \ge \frac {\frac {11}3 - \frac {2 \alpha}3 + \frac 1{3 \alpha}}{\frac 43 + \alpha + \frac 1{3 \alpha}} 
	= \frac {21}2 - \sqrt{78}.
\]
This completes the proof of the theorem.
\end{proof}

\begin{theorem}
\label{thm03}
The expected competitive ratio of any randomized algorithm for $P2 \mid online, t_j, 0 \le p_j \le u_j \mid C_{max}$
is at least $\frac {21 + 4 \sqrt{51}}{30} \approx 1.6522$.
\end{theorem}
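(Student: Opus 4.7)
The plan is to apply Theorem~\ref{thm01} by exhibiting a randomized instance $I_r$ of $P2 \mid online, t_j, 0 \le p_j \le u_j \mid C_{\max}$ that forces every deterministic algorithm $A$ to incur $E[C^A(I_r)]/E[C^*(I_r)] \ge \frac{21 + 4\sqrt{51}}{30}$. The construction will parallel that of Theorem~\ref{thm02}, but must be tuned to exploit the three-way dichotomy of Lemma~\ref{lemma07} rather than the two-way dichotomy of Lemma~\ref{lemma06}.

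I would take the prefix $I$ of Instance~\ref{ex03} specialized to $m=2$ (three identical-looking jobs with $u_j=2$, $t_j=1$), and extend it with an additional job $J_4$ whose $(u_4, t_4, p_4)$ triple varies across three (or four) deterministic continuations $I_1, I_2, I_3$ assigned probabilities $(\beta_1, \beta_2, \beta_3)$. Mirroring the template of Theorem~\ref{thm02}, the natural candidates are $(0,0,0)$ in $I_1$, and $(2+\tfrac{1}{\alpha}, 3, 0)$ and $(2+\tfrac{1}{\alpha}, 3, 2+\tfrac{1}{\alpha})$ in $I_2, I_3$ for a parameter $\alpha$ to be optimized. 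The role of $\alpha$ and the distribution $(\beta_1,\beta_2,\beta_3)$ is to balance the three case-wise lower bounds given by Lemma~\ref{lemma07} so that no deterministic algorithm can do better than the target ratio.

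Next I would compute $E[C^*(I_r)]$ directly: since in the prefix $\rho_1=\rho_2=\rho_3=1$, each $C^*(I_i)$ is determined by the offline choice for $J_4$ only, giving a clean closed-form expression linear in $\alpha$ and the $\beta_i$'s. For the numerator, I would case-split on every deterministic algorithm $A$ according to (a) which of the three scenarios of Lemma~\ref{lemma07} applies to $A$ on the common prefix $I$ (namely $C^A(I)\ge 5$; $C^A(I)\ge 4$ with $C^A_{\min}(I)\ge 1$; or $C^A(I)\ge 3$ with $C^A_{\min}(I)\ge 2$), and (b) whether $A$ tests $J_4$ in the indistinguishable instances $I_2, I_3$ (the decision must be the same because $u_4$ and $t_4$ agree there). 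In each combined case I would use $p_4^A \ge \rho_4$ together with the prefix bound and the least-loaded-machine lower bound for $J_4$'s placement to derive a bound of the form $E[C^A(I_r)] \ge \sum_i \beta_i \cdot L_i$ for explicit $L_i$'s depending on $\alpha$.

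The main obstacle is the calibration: because Lemma~\ref{lemma07} yields three scenarios instead of two, there are strictly more cases than in Theorem~\ref{thm02}, and the worst case over $A$ is no longer realized by a single equality. I expect to have to solve for $\alpha$ and $(\beta_1,\beta_2,\beta_3)$ by forcing two or three of the case-wise ratios to coincide simultaneously, which leads to a quadratic whose discriminant produces the $\sqrt{51}$ in $\frac{21 + 4\sqrt{51}}{30}$; if three active cases cannot be balanced by a three-point distribution, I would augment $I_r$ with a fourth deterministic instance (e.g. tuning $J_4$'s parameters or appending a $J_5$) to activate the third scenario of Lemma~\ref{lemma07}. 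Once the parameters are pinned down and the minimum over cases of $E[C^A(I_r)]/E[C^*(I_r)]$ is verified to equal $\frac{21+4\sqrt{51}}{30}$, Theorem~\ref{thm01} completes the argument.
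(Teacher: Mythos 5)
Your plan is essentially the paper's proof: apply Theorem~\ref{thm01} to a three-point randomized instance built by appending a fourth job to the $m=2$ prefix of Instance~\ref{ex03}, then case-split on the three scenarios of Lemma~\ref{lemma07} crossed with the test/untest decision on $J_4$ (which must agree on $I_2$ and $I_3$). Two points to fix. First, your claim that $\rho_1=\rho_2=\rho_3=1$ in the prefix is wrong: the adversary of Instance~\ref{ex03} makes two of these equal to $1$ and one equal to $2$ (it happens not to change $C^*(I_1)=2$, but it does enter the computation of $C^*(I_2)$ and $C^*(I_3)$). Second, the calibration you flag as the main obstacle does work with just three instances and no fourth is needed; the paper takes $(u_4,t_4)=(\tfrac{2}{\alpha},4)$ rather than your $(2+\tfrac1\alpha,3)$, with $\alpha=\tfrac{\sqrt{51}-4}{10}$ and $(\beta_1,\beta_2,\beta_3)=(\tfrac12,\alpha,\tfrac12-\alpha)$. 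The reason $\beta_1=\tfrac12$ balances all three scenarios simultaneously is that Lemma~\ref{lemma07} gives $C^A(I)+C^A_{\min}(I)\ge 5$ in every scenario ($5{+}0$, $4{+}1$, $3{+}2$), and $J_4$'s completion time picks up $C^A_{\min}(I)$ with total weight $\beta_2+\beta_3=\tfrac12$, so each unit of makespan lost on $I_1$ is exactly recovered on $I_2,I_3$; every one of the six cases then yields $E[C^A(I_r)]\ge \tfrac52+\tfrac1\alpha$ against $E[C^*(I_r)]=4\alpha+\tfrac1\alpha-1$, giving the stated ratio. Until you pin down these parameters and verify the six cases, the argument is a correct outline rather than a proof.
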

\begin{proof}
Again we consider a randomized instance $I_r$, which is a probability distribution over three deterministic instances $I_1, I_2$ and $I_3$.
These three instances are all extended from the three-job instance $I$ in Instance~\ref{ex03} when $m = 2$,
by appending a job $J_4$ at the end of the job sequence.
Specifically, for $J_4$,
\[
(u_4, t_4, p_4) = \left\{
\begin{array}{ll}
(0, 0, 0),									&\mbox{ in } I_1,\\
(\frac 2\alpha, 4, 0),					&\mbox{ in } I_2,\\
(\frac 2\alpha, 4, \frac 2\alpha),		&\mbox{ in } I_3,
\end{array}\right.
\]
where $\alpha = \frac {\sqrt{51} - 4}{10} \approx 0.3141$,
and its probability distribution is $(\beta_1, \beta_2, \beta_3) = (\frac 12, \alpha, \frac 12 - \alpha)$.

Recall that two of $\rho_1, \rho_2, \rho_3$ are $1$ while the other is $2$.
For the job $J_4$, $\rho_4 = 0, 4, \frac 2{\alpha}$ in $I_1, I_2, I_3$, respectively.
It follows that the optimal offline makespans of $I_1, I_2, I_3$ are $C^* = 2, 4, \frac 2{\alpha}$, respectively.
Therefore, the expected optimal offline makespan of $I_r$ is
\[
E[C^*(I_r)] = 2 \beta_1 + 4 \beta_2 + \frac 2\alpha \beta_3 = 4 \alpha + \frac 1\alpha - 1.
\]

For any deterministic algorithm $A$, it performs the same on the prefix instance $I$ of all the three instances $I_1, I_2$ and $I_3$,
including which of the first three jobs are tested, to which machines they are assigned to,
and the two machine loads when the last job $J_4$ arrives.
In particular, since $u_4$ and $t_4$ are the same in $I_2$ and $I_3$, $J_4$ is either both tested or both untested by $A$.
By Lemma~\ref{lemma07}, we have either $C^A(I_1) = C^A(I) \ge 5$, or $C^A(I_1) = C^A(I) \ge 4$ and $C^A_{\min}(I_1) = C^A_{\min}(I) \ge 1$,
or $C^A(I_1) = C^A(I) \ge 3$ and $C^A_{\min}(I_1) = C^A_{\min}(I) \ge 2$.
We distinguish the following six cases on $C^A(I)$ and whether $J_4$ of $I_2$ is tested by $A$ or not.

When $C^A(I) \ge 5$ and $J_4$ of $I_2$ is tested by $A$, we have $C^A(I_2) \ge 5$ and $C^A(I_3) \ge p_4^A = 4 + \frac 2\alpha$, leading to
\[
E[C^A(I_r)] \ge 5 \beta_1 + 5 \beta_2 + (4 + \frac 2\alpha)\beta_3 = \frac 52 + \alpha + \frac 1\alpha > \frac 52 + \frac 1\alpha.
\]

When $C^A(I) \ge 5$ and $J_4$ of $I_2$ is untested by $A$, we have $C^A(I_i) \ge p_4^A = \frac 2\alpha$ for both $i = 2, 3$, leading to
\[
E[C^A(I_r)] \ge 5 \beta_1 + \frac 2\alpha (\beta_2 + \beta_3) = \frac 52 + \frac 1\alpha.
\]

When $C^A(I) \ge 4$ and $C_{\min}(I) \ge 1$, and $J_4$ of $I_2$ is tested by $A$,
we have $p_4^A = 4$ in $I_2$ and thus $C^A(I_2) \ge p_4^A + 1 = 5$, and similarly $C^A(I_3) \ge p_4^A + 1 = 5 + \frac 2\alpha$.
They lead to
\[
E[C^A(I_r)] \ge 4 \beta_1 + 5 \beta_2 + (5 + \frac 2\alpha)\beta_3 = \frac 52 + \frac 1\alpha.
\]

When $C^A(I) \ge 4$ and $C_{\min}(I) \ge 1$, and $J_4$ of $I_2$ is untested by $A$,
we have $p_4^A = \frac 2\alpha$ in both $I_2$ and $I_3$ and thus $C^A(I_i) \ge p_4^A + 1 = 1 + \frac 2\alpha$ for both $i = 2, 3$.
They lead to
\[
E[C^A(I_r)] \ge 4 \beta_1 + (1 + \frac 2\alpha)(\beta_2+\beta_3) = \frac 52 + \frac 1\alpha.
\]

When $C^A(I) \ge 3$ and $C_{\min}(I) \ge 2$, and $J_4$ of $I_2$ is tested by $A$,
we have $p_4^A = 4$ in $I_2$ and thus $C^A(I_2) \ge p_4^A + 2 = 6$, and similarly $C^A(I_3) \ge p_4^A + 2 = 6 + \frac 2\alpha$.
They lead to
\[
E[C^A(I_r)] \ge 3 \beta_1 + 6 \beta_2 + (6 + \frac 2\alpha)\beta_3 = \frac 52 + \frac 1\alpha.
\]

Lastly, when $C^A(I) \ge 3$ and $C_{\min}(I) \ge 2$, and $J_4$ of $I_2$ is untested by $A$,
we have $p_4^A = \frac 2\alpha$ in both $I_2$ and $I_3$ and thus $C^A(I_i) \ge p_4^A + 2 = 2 + \frac 2\alpha$ for both $i = 2, 3$.
They lead to
\[
E[C^A(I_r)] \ge 3 \beta_1 + (2 + \frac 2\alpha)(\beta_2+\beta_3) = \frac 52 + \frac 1\alpha.
\]

To conclude, by Theorem~\ref{thm01},
the expected competitive ratio of any randomized algorithm for $P2 \mid online, t_j, 0 \le p_j \le u_j \mid C_{\max}$ is at least
\[
\inf_A \frac {E[C^A(I_r)]}{E[C^*(I_r)]} \ge \frac {\frac 52 + \frac 1\alpha}{4 \alpha + \frac 1\alpha - 1}
	= \frac {21 + 4\sqrt{51}}{30}.
\]
This completes the proof of the theorem.
\end{proof}

\subsection{On deterministic competitive ratios}
Recall that besides the lower bound of $2$ on the deterministic competitive ratio for the uniform testing case
$P \mid online, t_j = 1, 0 \le p_j \le u_j \mid C_{\max}$,
Albers and Eckl~\cite{AE21} also showed a slightly better lower bound of $2.0953$ for the two-machine general testing case
$P2 \mid online, t_j, 0 \le p_j \le u_j \mid C_{\max}$.
We improve the latter lower bound of $2.0953$ to $2.2117$ in the next theorem.
Therefore, for $P2 \mid online, t_j, 0 \le p_j \le u_j \mid C_{\max}$,
the expected competitive ratio of {\sc GCL} not only beats the best known deterministic competitive ratio of $2.3019$~\cite{GFL22}
for the semi-online variant $P2 \mid t_j, 0 \le p_j \le u_j \mid C_{\max}$,
but also beats the lower bound of $2.2117$ on the deterministic competitive ratio for the fully online problem.

\begin{theorem}
\label{thm06}
The competitive ratio of any deterministic algorithm for $P2 \mid online, t_j, 0\le p_j \le u_j \mid C_{max}$
is greater than $2.2117$.
\end{theorem}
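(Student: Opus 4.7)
The plan is to construct an adversarial online job sequence for two machines that forces any deterministic algorithm to produce a makespan strictly greater than $2.2117 \cdot C^*$. The construction follows the standard adaptive-adversary paradigm: jobs arrive one at a time revealing only their pair $(u_j, t_j)$; after the algorithm irrevocably decides whether to test $J_j$ and on which machine to process it, the adversary commits to a processing time $p_j \in [0, u_j]$ and possibly issues the next job. The online makespan $C^A$ is then compared against the optimal offline makespan $C^*$ computed with full knowledge of all the $p_j$'s.

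First, I would fix a target ratio $\rho$ slightly above $2.2117$ and design a small family of jobs whose $(u_j,t_j)$ parameters are optimised against $\rho$. A natural starting construction releases two initial ``twin'' jobs sharing the same $(u,t)$: the algorithm must commit to testing and assignment decisions on the first before any information can distinguish it from the second, handing the adversary substantial leverage. After observing these decisions, the adversary fixes the $p_j$'s at the extremes ($0$ or $u$) and then issues a third (and possibly fourth) job whose parameters are calibrated to exploit the resulting load imbalance.

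The main step is a complete case analysis over the algorithm's decisions on the twin jobs: for each combination of (i) how many of them are tested and (ii) whether they are placed on the same machine or on distinct machines, the adversary selects the hidden processing times and the follow-up job so as to push the ratio $C^A/C^*$ to at least $\rho$. Equating the worst-case ratios across all branches yields a system of equations in $\rho$ and in the instance parameters; its solution should pin $\rho$ down to the root of a small-degree polynomial, numerically $2.2117$.

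The hard part will be twofold. First, the parameters must be chosen so that \emph{every} branch of the case analysis is tight at the same value $\rho$: if any branch is slack, the adversary is leaving competitive power on the table; if any branch is infeasible (e.g.\ requires $p_j > u_j$ or contradicts the algorithm's allowed testing decision) then that branch escapes the bound. Second, to beat the $2.0953$ bound of Albers and Eckl~\cite{AE21}, the construction likely needs extra adversarial freedom beyond their instance — most plausibly non-uniform $t_j$ values across phases or a three-phase reveal — so that the adversary can force a worse load imbalance regardless of where the algorithm's testing threshold on the ratio $r_j = u_j/t_j$ happens to lie.
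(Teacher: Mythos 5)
Your proposal correctly identifies the general paradigm the paper uses (an adaptive adversary releasing a short job sequence, extreme choices of $p_j$ after each irrevocable decision, a case analysis over test/placement decisions, and parameters tuned so that all branches equalize at the target ratio), but it stops exactly where the mathematical content of the theorem begins. The statement ``$>2.2117$'' is a claim about a specific algebraic number, and nothing in your outline produces it: you give no concrete $(u_j,t_j)$ values, no adversary responses, and no verification that every branch of the case analysis actually attains the bound. You yourself flag the two hard parts (simultaneous tightness of all branches, and feasibility of every branch), and those are precisely the parts that constitute the proof. As written, this is a research plan, not a proof, and it cannot be credited with establishing the stated constant.

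Moreover, the one concrete structural choice you do commit to --- two initial ``twin'' jobs with identical $(u,t)$ --- is not what the paper's construction uses, and it is not clear it can reach $2.2117$. The paper instead opens with a single job $J_1$ with $u_1=\varphi$, $t_1=1$: whether or not the algorithm tests it, the adversary forces $p_1^A/\rho_1=\varphi$, and a scaling argument lets one assume without loss of generality the untested outcome ($p_1^A=\varphi$, $\rho_1=1$). This collapses the first-level case split entirely, which is what makes the remaining analysis tractable. The second job has \emph{different} parameters $u_2=\varphi-x_0$, $t_2=x_0$ with $x_0$ a root of $x^2-(3\varphi+1)x+\varphi^2=0$, and the third job's parameters (involving a further root $y_0$ of a quadratic depending on $x_0$) are chosen \emph{differently in the two branches} of the $J_2$ decision so that the final ratios $\frac{y_0+\varphi}{\varphi+1-x_0}$ and $\frac{2\varphi+1-x_0+y_0}{y_0}$ coincide at $\approx 2.21172$. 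Twin jobs with identical parameters give the adversary less, not more, freedom here, because the leverage in this proof comes from choosing $t_2/u_2$ and then $t_3/u_3$ adaptively against the algorithm's previous choices rather than from indistinguishability. To complete your argument you would need to exhibit explicit parameters, check all placement and testing branches (including the branch where both early jobs land on the same machine, which the paper handles by voiding $J_3$), and solve the resulting system for the ratio; absent that, the bound is not established.
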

\begin{proof}
Note that we are proving a lower bound on the deterministic competitive ratio,
and we use an adversarial argument to construct a three-job instance with the job order $\langle J_1, J_2, J_3\rangle$.
Consider a deterministic algorithm $A$.

The first job $J_1$ comes with $u_1 = \varphi$ and $t_1 = 1$.
If $A$ tests $J_1$, then the adversary sets $p_1 = \varphi$.
It follows that $p_1^A = \varphi + 1$ and $\rho_1 = \varphi$.
If $J_1$ is untested, then the adversary sets $p_1 = 0$ and thus $p_1^A = \varphi$ and $\rho_1 = 1$.
Either way, we have $\frac {p_1^A}{\rho_1} = \varphi$, and thus we assume below that $p_1^A = \varphi$ and $\rho_1 = 1$.
(If $p_1^A = \varphi + 1$ and $\rho_1 = \varphi$,
then the below $u_j$- and $t_j$-values associated with $J_2$ and $J_3$ are scaled up by multiplying a factor of $\varphi$.)

For any upcoming job $J_j$, i.e., $j = 2, 3$, the adversary always sets $p_j = u_j$ if $J_j$ is tested by the algorithm,
or otherwise sets $p_j = 0$.

Let $x_0 = \frac {3\varphi+1-\sqrt{11\varphi+6}}2 \approx 0.4878$, which is a root to the quadratic equation $x^2 - (3 \varphi + 1) x + \varphi^2 = 0$.
The second job $J_2$ comes with $u_2 = \varphi - x_0$ and $t_2 = x_0$.
We distinguish two cases on whether or not $J_2$ is tested by the algorithm.

\noindent{\bf Case 1:} $J_2$ is tested by the algorithm.
In this case, $p_2 = u_2$ and thus $p_2^A = \varphi$ and $\rho_2 = \varphi - x_0$.
If $J_2$ is scheduled on the same machine with $J_1$,
then the third job $J_3$ is voided (by setting $u_j = t_j = 0$) leading to $C^A = p_1^A + p_2^A = 2\varphi$.
Note that $\rho_2 > 1$ and thus $C^* = \rho_2 = \varphi - x_0$.
It follows that the competitive ratio is $\frac {2 \varphi}{\varphi - x_0}$.

Below we assume $J_j$ is assigned to the machine $M_j$, for $j = 1, 2$, respectively, and thus the load of each machine is $\varphi$.
The third job $J_3$ comes with $u_3 = y_0$ and $t_3 = \varphi + 1 - x_0$,
where $y_0 = \frac {1 - x_0 + \sqrt{(3 \varphi - 5) x_0 + 15 \varphi + 8}}2 \approx 3.0933$
is a root to the quadratic equation $y^2 - (1 - x_0) y - (\varphi + 1 - x_0)(2 \varphi + 1 - x_0) = 0$.

If $J_3$ is untested, then $p_3 = 0$ and thus $p_3^A = y_0$ and $\rho_3 = \varphi + 1 - x_0$, leading to $C^A = \varphi + y_0$.
In the optimal offline schedule, $J_1$ and $J_2$ are scheduled on one machine while $J_3$ is scheduled on the other,
leading to the optimal offline makespan $C^* = \varphi + 1 - x_0$.
Therefore, $\frac {C^A}{C^*} = \frac {y_0 + \varphi}{\varphi + 1 - x_0}$.

If $J_3$ is tested, then $p_3 = u_3$ and thus $p_3^A = t_3 + u_3 = \varphi + 1 - x_0 + y_0$ and $\rho_3 = y_0$,
leading to $C^A = 2 \varphi + 1 - x_0 + y_0$.
In the optimal offline schedule, $J_1$ and $J_2$ are scheduled on one machine while $J_3$ is scheduled on the other,
leading to $C^* = y_0$ and subsequently $\frac {C^A}{C^*} = \frac {2 \varphi + 1 - x_0 + y_0}{y_0}$.

To conclude this case, we have 
\[
\frac {C^A}{C^*} \ge \min\left\{ \frac {2 \varphi}{\varphi - x_0}, \frac {y_0 + \varphi}{\varphi + 1 - x_0}, \frac {2 \varphi + 1 - x_0 + y_0}{y_0} \right\}.
\]
Since $y_0$ is a root to the quadratic equation $y^2 - (1 - x_0) y - (\varphi + 1 - x_0)(2 \varphi + 1 - x_0) = 0$, we have
\[
y_0 (y_0 + \varphi) = (\varphi + 1 - x_0)(y_0 + 2\varphi + 1 - x_0),
\]
that is, the last two quantities in the above are equal and approximately $2.21172$.
The first quantity in the above is about $2.8634$.
Therefore, $\frac {C^A}{C^*} > 2.2117$.

\noindent{\bf Case 2:} $J_2$ is untested by the algorithm.
In this case, $p_2 = 0$ and thus $p_2^A = \varphi - x_0$ and $\rho_2 = x_0$.
If $J_2$ is scheduled on the same machine with $J_1$,
then the third job $J_3$ is voided (by setting $u_j = t_j = 0$) leading to $C^A = p_1^A + p_2^A = 2\varphi - x_0$.
Note that $\rho_2 = x_0 < 1$ and thus $C^* = 1$.
It follows that the competitive ratio is $\frac {C^A}{C^*} = 2 \varphi - x_0$.

Below we assume $J_j$ is assigned to the machine $M_j$, for $j = 1, 2$, respectively,
and thus the loads of the two machines are $\varphi$ and $\varphi - x_0$, respectively.
The third job $J_3$ comes with $u_3 = \frac {(1+\varphi)y_0}{2\varphi+1-x_0}\approx 2.1606$ and $t_3 = 1 + x_0$,
where $y_0$ is set the same as in Case 1.

If $J_3$ is untested, then $p_3 = 0$ and thus $p_3^A = u_3 = \frac {(1 + \varphi) y_0}{2 \varphi + 1 - x_0}$ and $\rho_3 = 1 + x_0$,
leading to $C^A \ge \varphi - x_0 + \frac {(1 + \varphi) y_0}{2 \varphi + 1 - x_0}$ no matter which machine $J_3$ is assigned to.
In the optimal offline schedule, $J_1$ and $J_2$ are scheduled on one machine while $J_3$ is scheduled on the other,
leading to the optimal offline makespan $C^* = 1 + x_0$.
Therefore, $\frac {C^A}{C^*} \ge \frac {\varphi - x_0 + \frac {(1 + \varphi) y_0}{2 \varphi + 1 - x_0}}{1 + x_0}
= \frac {(\varphi - x_0)(2 \varphi + 1 - x_0) + (1 + \varphi) y_0}{(1 + x_0)(2 \varphi + 1 - x_0)}$.
Since $x_0$ is a root to the quadratic equation $x^2 - (3 \varphi + 1) x + \varphi^2 = 0$, we have 
\[
(\varphi - x_0)(2 \varphi + 1 - x_0) = x_0^2 - (3 \varphi + 1) x_0 + \varphi (1 + 2 \varphi) = \varphi (1 + \varphi)
\]
and
\[
(1 + x_0)(2 \varphi + 1 - x_0) = -x_0^2 + 2 \varphi x_0 + 2 \varphi + 1 = (\varphi + 1)(1 + \varphi - x_0).
\]
It follows that $\frac {C^A}{C^*} \ge \frac {y_0 + \varphi}{1 + \varphi - x_0}$.

If $J_3$ is tested, then $p_3 = u_3$ and thus $p_3^A = t_3 + u_3 = 1 + x_0 + \frac {(1 + \varphi) y_0}{2 \varphi + 1 - x_0}$
and $\rho_3 = \frac {(1 + \varphi) y_0}{2 \varphi + 1 - x_0}$,
leading to $C^A \ge 1 + \varphi + \frac {(1 + \varphi) y_0}{2 \varphi + 1 - x_0}$ no matter which machine $J_3$ is assigned to.
Since $\rho_3 > \rho_1 + \rho_2$, in the optimal offline schedule,
$J_1$ and $J_2$ are scheduled on one machine while $J_3$ is scheduled on the other,
leading to $C^* = \rho_3 = \frac {(1 + \varphi) y_0}{2 \varphi + 1 - x_0}$ and subsequently
$\frac {C^A}{C^*} \ge \frac {y_0 + 2 \varphi + 1 - x_0}{y_0}$.

To conclude this case, we have 
\[
\frac {C^A}{C^*} \ge \min\left\{ 2 \varphi - x_0, \frac {y_0 + \varphi}{1 + \varphi - x_0}, \frac {2 \varphi + 1 - x_0 + y_0}{y_0} \right\}.
\]
The same as in Case 1, the last two quantities in the above are equal and approximately $2.21172$.
The first quantity in the above is about $2.7482$.
Therefore, $\frac {C^A}{C^*} > 2.2117$.

The above two cases prove that the competitive ratio of any deterministic algorithm for
$P2 \mid online, t_j, 0\le p_j \le u_j \mid C_{max}$ is greater than $2.2117$.
\end{proof}

\section{Conclusion}
We investigated the fully online multiprocessor scheduling with testing problem $P \mid online, t_j, 0 \le p_j \le u_j \mid C_{\max}$,
and presented a randomized algorithm {\sc GCL} that is a non-uniform distribution of arbitrarily many deterministic algorithms.
To the best of our knowledge, randomized algorithms as meta algorithms in the literature are mostly uniform distributions of their
component deterministic algorithms, while our {\sc GCL} is one to bias towards one component algorithm;
also, when there are many component algorithms, the previous randomized algorithm often involves bookkeeping all their solutions,
while our {\sc GCL} does not do so since the component algorithms are independent of each other.
We showed that the expected competitive ratio of our {\sc GCL} is around $3.1490$.
When there are only two machines, i.e., for $P2 \mid online, t_j, 0 \le p_j \le u_j \mid C_{\max}$,
using only two component algorithms in the revised {\sc GCL} algorithm leads to an expected competitive ratio of $2.1839$.

We also proved three inapproximability results,
including a lower bound of $1.6682$ on the expected competitive ratio of any randomized algorithm at the presence of at least three machines,
a lower bound of $1.6522$ on the expected competitive ratio of any randomized algorithm for
$P2 \mid online, t_j, 0 \le p_j \le u_j \mid C_{\max}$,
and a lower bound of $2.2117$ on the competitive ratio of any deterministic algorithm for $P2 \mid online, t_j, 0 \le p_j \le u_j \mid C_{\max}$.
By the last lower bound, we conclude that the algorithm {\sc GCL} beats any deterministic algorithm for $P2 \mid online, t_j, 0 \le p_j \le u_j \mid C_{\max}$,
in terms of its expected competitive ratio.
Such an algorithmic result is rarely seen in the literature.

The expected competitive ratio of the algorithm {\sc GCL} is far away from the lower bounds of $1.6522$, $1.6682$ and $2 - \frac 1m$
for two, three and $m \ge 4$ machines, respectively, suggesting future research to narrow the gaps, even for the two-machine case.
One sees that the testing probability function in the job ratio in the algorithm {\sc GCL} is non-continuous at $\varphi$,
suggesting the existence of a plausible better probability distribution function for choosing component algorithms.



\section*{Declarations}
\paragraph*{Data availability.}
Not applicable.

\paragraph*{Interests.}
The authors declare that they have no known competing financial interests or personal relationships
that could have appeared to influence the work reported in this paper.


\end{document}